\theoremstyle{definition}
\newtheorem{theorem}{Theorem}
\newtheorem{lemma}[theorem]{Lemma}
\DeclareMathOperator*{\argmin}{arg\,min}
\newcommand\revision[1]{{#1}}
\providecommand{\keywords}[1]{\textbf{\small{Keywords:}} #1}
\title{Machine Learning and Polymer Self-Consistent Field Theory in Two Spatial Dimensions}
\author[1\thanks{corresponding author, yxscience@gmail.com}\thanks{work is done while at UC Santa Barbara}]{Yao Xuan}
\author[2]{Kris T. Delaney}
\author[1]{Hector D. Ceniceros}
\author[2,3]{Glenn H. Fredrickson}
\affil[1]{Department of Mathematics, University of California, Santa Barbara, California 93106,
USA}
\affil[2]{Materials Research Laboratory, University of California, Santa Barbara, California 93106, USA}
\affil[3]{Departments of Materials and Chemical Engineering, University of California, Santa Barbara, California 93106, USA}
\begin{document}

\maketitle
\begin{abstract}
A computational framework that leverages data from self-consistent field theory simulations with deep learning to accelerate the exploration of parameter space for block copolymers is presented. This is a substantial two-dimensional extension of the framework introduced in~\cite{xuan2021deep}. Several innovations and improvements are proposed. (1) A Sobolev space-trained, convolutional neural network (CNN) is employed  to handle the exponential dimension increase of the discretized, local average monomer density fields and to strongly enforce both spatial translation and rotation invariance of the predicted, field-theoretic intensive Hamiltonian. (2) A generative adversarial network (GAN) is introduced to efficiently and accurately predict saddle point, local average monomer density fields without resorting to gradient descent methods that employ the training set. This GAN approach yields important savings of both memory and computational cost.  (3) The proposed machine learning framework is successfully applied to 2D cell size optimization as a clear illustration of its broad potential to accelerate the exploration of parameter space for discovering polymer nanostructures.  Extensions to three-dimensional phase discovery appear to be feasible.

\end{abstract}
\keywords{\small{Self-Consistent Field Theory; Deep Learning; Sobolev Space; Saddle Point Density Fields; Convolutional Neural Network; Translation and Rotation Invariance; Generative Adversarial Network.}}

\begin{section}{Introduction}\label{sec:2d_introduction}
Numerical simulations based on self-consistent field theory (SCFT) are a powerful tool to study the energetics and structures of polymer phases~\cite{fredrickson2006equilibrium,Matsen2007,Schmid_1998}. However, the high cost of these direct computations, involving the repeated solution of multiple modified diffusion (Fokker-Planck) equations~\cite{fredrickson2006equilibrium,CF2004,stasiak2011efficiency}, hinders the scalability of SCFT for its application in polymer phase discovery. 

Machine learning (ML), which obtains an approximate input-to-output map from data,  can substantially reduce (after training) the computational cost of evaluating quantities of interest. Consequently, there has been increasing interest to combine ML with traditional polymer SCFT simulations to speed up the exploration of parameter space. 
Most of the work to date has been focused on separate stages of SCFT computation and/or downstream tasks. For example, Wei, Jiang, and Shi~\cite{wei2018machine} trained a neural network to the solution of SCFT's modified diffusion equations to accelerate the computations of the mean fields.
Nakamura~\cite{nakamura2020phase} proposed to predict the type of polymer phase by using a neural network with a theory-embedded layer that captures the characteristic phase features via coarse-grained mean-field theory. Arora, Lin, Rebello, Av-Ron, Mochigase and Olsen~\cite{arora2021random} employed random forests to predict diblock copolymer phase behavior.  On the other hand, our approach in~\cite{xuan2021deep} is a complete ML framework for full SCFT simulation via deep learning to obtain both accurate approximations of the 
 field-theoretic intensive Hamiltonian
(also called effective Hamiltonian and representing the intensive free energy in the mean-field approximation) and predictions of the (local average) monomer density field at SCFT saddle points. 
This initial ML framework was developed in the one-dimensional spatial setting to test this  combination of ML and SCFT. The framework 
in~\cite{xuan2021deep} is a two-step computational strategy: (1) train a (fully connected) deep neural network in Sobolev space, which predicts the effective Hamiltonian and its gradient simultaneously, and enforces translation invariance  and (2) leverage the gradient descent method on the Sobolev space-trained deep network (the surrogate effective Hamiltonian) to find an approximation of the monomer density field at a saddle point with initial density selected from the training set. 

Here, we propose an efficient and superior end-to-end ML framework, both in memory and computational cost, than that developed in the one-dimensional spatial setting. Thus, this is not just a two-dimensional extension of~\cite{xuan2021deep}.

 The increase of the spatial dimension introduces two significant challenges.  
 First, 
 \revision{the size of the input discrete monomer density field increases like $n^d$ where $n$ is the number of field values (values at grid points) per dimension and $d$ is the spatial dimension}.  Second, the effective Hamiltonian must be invariant under both 
 translation {\em and rotation} transformations of the monomer density field \revision{(and the polymer cell)}. 
 To overcome these salient difficulties,  we introduce a new ML-approach for the complete prediction of the effective Hamiltonian and the average monomer density. This new framework has the following main 
 innovations:
\begin{enumerate}
    \item We view the monomer density fields as 2D images and replace the fully connected neural network with a more efficient
convolutional neural network (CNN)~\cite{cnnorigin} trained in Sobolev space to guarantee accurate approximations of both the effective Hamiltonian and its gradient in a vicinity of saddle points. 
 CNN's have proved to be a powerful tool for many image-related problems~\cite{VALUEVA2020232}.
 \item We use a rotation-invariant representation of the polymer cell parameters and design a CNN architecture based on circular padding to achieve strong rotation and translation invariance. This is a improvement over the weak translation invariance achieved by data-enhancing~\cite{xuan2021deep}.
\item We utilize a generative adversarial network (GAN)~\cite{NIPS2014_5ca3e9b1} to efficiently generate candidate monomer density fields at saddle points, eliminating the need for memorizing the training set. This saves
 enormous computer memory and computation time, and makes it possible to generate a wider range of saddle point, monomer density field predictors. The GAN results can be used as the final monomer density predictions, which effectively eliminates the gradient descent procedure. 
\end{enumerate}

The rest of the paper is organized as follows. In Section \ref{sec:2d_def}, we define the problem and the overall strategy. 
We introduce the CNN trained in Sobolev space and detail how strong rotation/translation invariance is enforced in Section \ref{sec:2d_cnn}. We devote Section \ref{sec:2d_gan} to introduce the new approach to predict saddle point, monomer density fields by generating initial density fields via a GAN and then selecting the final  prediction with the Sobolev space-trained CNN. In Section \ref{sec:2d_result}, we present numerical results for a two-dimensional AB diblock polymer system that demonstrate the efficiency of the proposed machine learning approach to produce accurate approximations of the 
effective Hamiltonian and saddle point, monomer density predictions. We \revision{also present} results on the effectiveness of the proposed ML method to obtain accurate heat maps of the effective Hamiltonian as a function of the cell size and further equip ML methods with efficient gradient-based searching algorithms for cell size optimization. 
 Finally, we provide some concluding remarks in Section~\ref{sec:2d_con}. 

\end{section}

\begin{section}{Problem Definition and General Strategy}\label{sec:2d_def}
To formulate the problem and illustrate the methodology, we use the concrete example of a 2D incompressible AB diblock copolymer melt\revision{~\cite{fredrickson2006equilibrium}}. \revision{There are some important factors to consider in this study of the system. These include:
\begin{itemize}
\item $\chi N$: This measures the strength of
segregation of the two components (A and B). It is calculated using the Flory parameter $\chi$ and the degree of polymerization (the number of monomer units) of the copolymer $N$.
\item Cell vectors $\vec{a}_1$ and $\vec{a}_2$: These are the adjacent sides of the cell parallelogram being studied. Their length is measured in units of the unperturbed radius of gyration $R_g$ (root-mean-square distance of the segments of the polymer chain from its center of mass.).
\item $f$: This is the volume fraction of component A.
\item $\rho$: This is the local average monomer density of blocks A, measured in units of the total monomer density $\rho_0$. Henceforth, we will refer
to it as the density field or simply the density, with the understanding that it is computed or
approximated in a vicinity of SCFT saddle points. 
\end{itemize}
}

As in the one-dimensional case~\cite{xuan2021deep}, we formulate two problems to solve:
\begin{itemize}
    \item \textbf{Problem 1}: Learn a surrogate map $(\chi N,\vec{a}_1,\vec{a}_2,f,\rho)\longmapsto \tilde{H}(\chi N,\vec{a}_1,\vec{a}_2,f,\rho)$, where $\tilde{H}$ is an accurate approximation of the field-theoretic intensive Hamiltonian $H$, the Helmholtz free-energy per chain at saddle points. Henceforth, we simply call $H$ the Hamiltonian. 
    \item \textbf{Problem 2}: For specific values $\chi {N}^*$, $\vec{a}_1^*$, $\vec{a}_2^*$, $f^*$, find accurately and efficiently
    a density field $\rho^*$ that minimizes $\tilde{H}$. 
\end{itemize}

The map $\tilde{H}$ in Problem 1 is trained to approximate $H$ in the vicinity of saddle points. This surrogate functional can thus be evaluated much more efficiently than through direct SCFT computation and provides an expedited way to screen for the density field candidates in Problem 2. 

As done in~\cite{xuan2021deep},  we recast  $H$ as 
 \begin{align}\label{h_represent}
   H(\chi N,\vec{a}_1,\vec{a}_2,f,\rho)=\chi N f-\frac{\chi N}{|\vec{a}_1\times \vec{a}_2|}\int \rho^2 dr+ R(\chi N,\vec{a}_1,\vec{a}_2,f,\rho), 
 \end{align}
by extracting the leading quadratic interaction term and focus on learning the remainder $R$\revision{. As the difference between $H$ and quadratic interaction terms,  $R$ contributes to polymer entropy but not enthalpy in (\ref{h_represent}), which includes a partition function for a single copolymer with a field acting on the A
block and the other field acting on the B block. This single chain partition function is evaluated by integrating the copolymer propagator satisfying the Fokker-Planck equations. Details are shown in Appendix of ~\cite{xuan2021deep}.} \revision{In (1), }  $\times$ is the cross product and thus the denominator of the second term is the \revision{cell's area}. With this decomposition of $H$ we directly evaluate its main part, which is cheap to compute, and approximate the expensive part $R$ using machine learning. 

As noted in the Introduction, increasing the spatial dimension introduces two new significant challenges: \revision{the signficantly increased size of the discretized density field} and the additional invariance requirement of $H$ under spatial rotation transformations of $\rho$. To overcome these challenges,  we upgrade the methodology introduced in \cite{xuan2021deep} as follows. To solve Problem 1, instead of training a fully connected, feedforward neural network in Sobolev space, we train a CNN in Sobolev space to learn the effective Hamiltonian. To implement the CNN on the density fields, we view these as images and use them as input for the CNN. We employ a particular representation of the cell parameters $\vec{a}_1$ and $\vec{a}_2$ to strongly enforce 
the rotation invariance requirement. Moreover, by taking advantage of CNN's local shift invariance, we  design a CNN architecture 
that preserves strongly global shift invariance (in the one-dimensional setting we only obtained weakly global shift invariance via data enhancing). To solve problem 2, we propose a new strategy: use a GAN to generate a poll of initial  density fields.  In the 1D approach\cite{xuan2021deep}, our initial density fields come from the training set. With the GAN, it is not necessary to memorize the training set and perform evaluations to select a suitable initial field candidate. 
This GAN feature dramatically saves memory and computational cost as the training set is both large and high dimensional. 
 In addition, the randomness element in the GAN increases the possibility to generate unseen initial density fields and hence it enhances the capability of the proposed methodology for the discovery of new polymer phases. We finally 
  select  the predicted density field from the pool as the one that yields the smallest gradient for the Sobolev space-trained CNN of Problem 1. \revision{It is important to note that we only use direct SCFT simulations to generate the data needed to train the proposed ML models}.
 
\end{section}

\begin{section}{Methodology: Approximation of $H$ by a Convolutional Neural Network}\label{sec:2d_cnn}
\revision{For the central problem of constructing a surrogate functional for $H$,  we employ the decomposition (\ref{h_represent}) 
and focus on learning the remainder $R$ from SCFT-generated data. That is, our proposed approximation of $H$ has the form
\begin{equation}\label{approx_H}
    \tilde{H}(x) =\chi N f-\frac{\chi N}{|\vec{a}_1\times \vec{a}_2|}\int \rho^2 dr+ NN(x), 
\end{equation}
where $x=(\chi N,\vec{a}_1,\vec{a}_2,f,\rho)$ and $NN(\cdot)$ is an approximation of $R$ learned by a CNN. It takes density fields as image inputs along with model parameters $\chi N,\vec{a}_1,\vec{a}_2,f$.  }

\revision{CNN's are neural networks designed to process multiple arrays like those used for digital color images (three 2D arrays, each corresponding to the red, green, and blue colors). Thus, the input of a CNN can be viewed as a volume, or a 3D array. 
They have been notoriously successful in a variety of computer vision tasks \cite{yamashita2018convolutional}.
CNN's exploit local correlations of data by using convolutional layer and pooling layers.
 In a convolutional layer, each of the input  array values is replaced by
a weighted sum of a few local neighbors by taking an inner product with a sliding, small ($3 \times 3$ or $5\times 5$) matrix
of weights (called a filter bank or kernel) performing in effect a discrete convolution. The
weights are learned during training using stochastic gradient descent method and a different kernel can be used per channel.  Figure \ref{fig:convolution} shows a schematics of how a convolutional layer works. }

\revision{A pooling layer is a downsampling operation that reduces the dimension of the feature map~\cite{yamashita2018convolutional}. Common pooling layers include the Max Pooling
and the Average Pooling, where the maximal value or the average value in each neighbourhood is taken, respectively. 
Other types of pooling are a Global Maximal Pooling  and a Global Average Pooling, where the corresponding global values are selected. }

\revision{Activation layers, for example a ReLU function, increase the non-linearity and the learning ability of the CNN. After several stages of convolution, pooling, and activation layers, we can send the feature map into fully connected layers to generate the final output. We provide the specific architecture of the proposed CNN in Section~\ref{subsec:inv}.}


\begin{figure}
    \centering
    \includegraphics[scale=0.5]{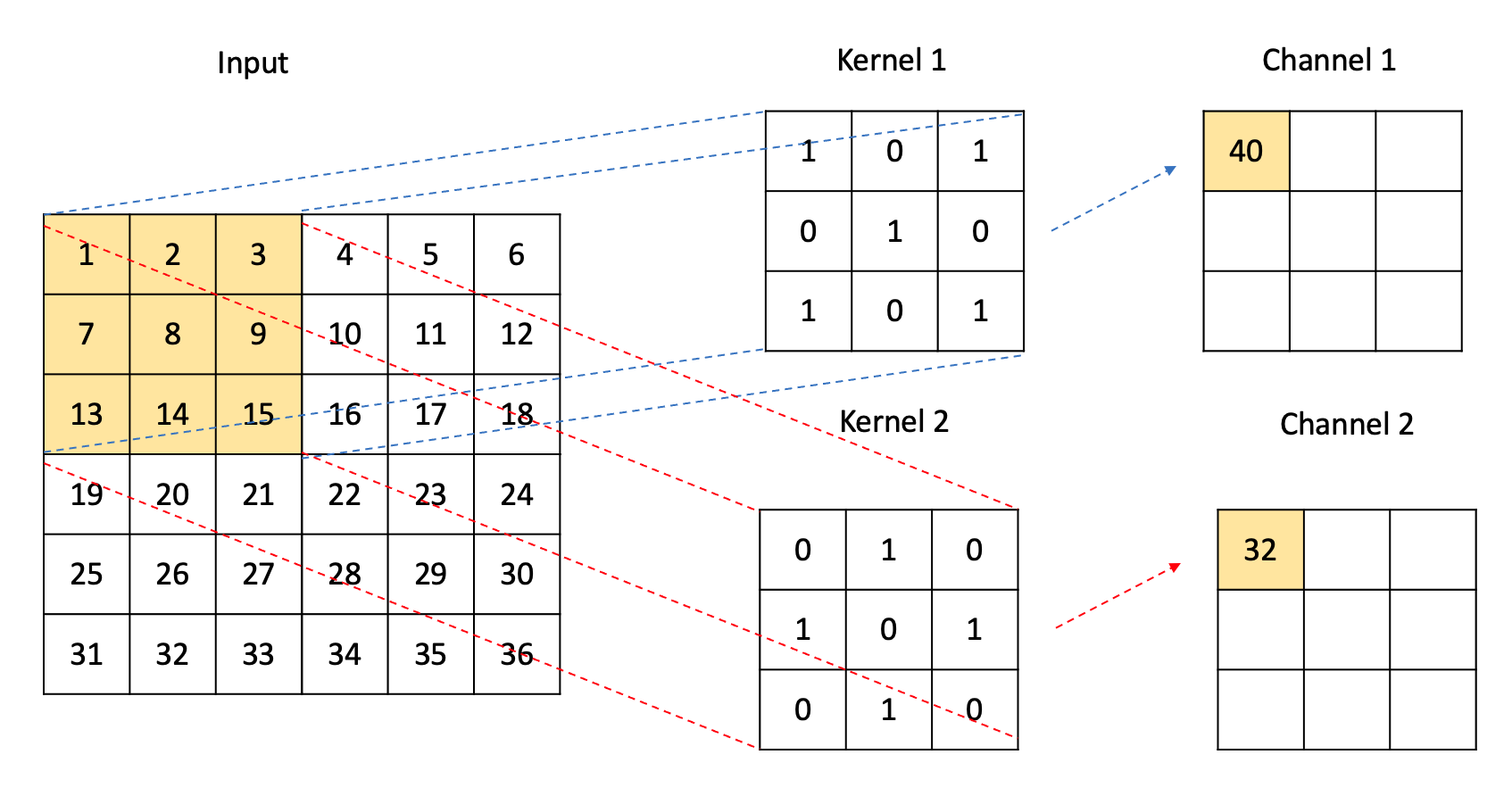}
    \caption{Illustration of a convolutional layer. \revision{Each of the input matrix values is replaced by a weighted sum of a few neighbors by taking an inner product with a sliding, small matrix of weights (called a filter bank or kernel) performing in effect a discrete convolution.
    The weights are learned during training. Channels can be used to account for extra dimensions in the input data. For example, a color image is composed of three 2D arrays (channels) corresponding to the pixel intensities in the red, green, and blue colors. Channels could be used to represent depth in 3D data. In this illustration, the weighted sum is done on element (2,2), with value 8,  of the input matrix.  }}
    \label{fig:convolution}
\end{figure}

\end{section}
\subsection{Rotation and Translation Invariance}\label{subsec:inv}

The central goal of Problem 1 is to 
produce an accurate approximation of $H$ which is invariant under both translations and rotations of the density field $\rho$. This invariance requirement  is illustrated in Fig.~\ref{fig:inv_rot_tran}, \revision{where the density fields in pictures (b) and (c)
result from translation and rotation, respectively, of the field in (a)}. All three pictures should 
give the  same value of the  Hamiltonian.
\begin{figure}
    \centering
    \includegraphics[scale=0.6]{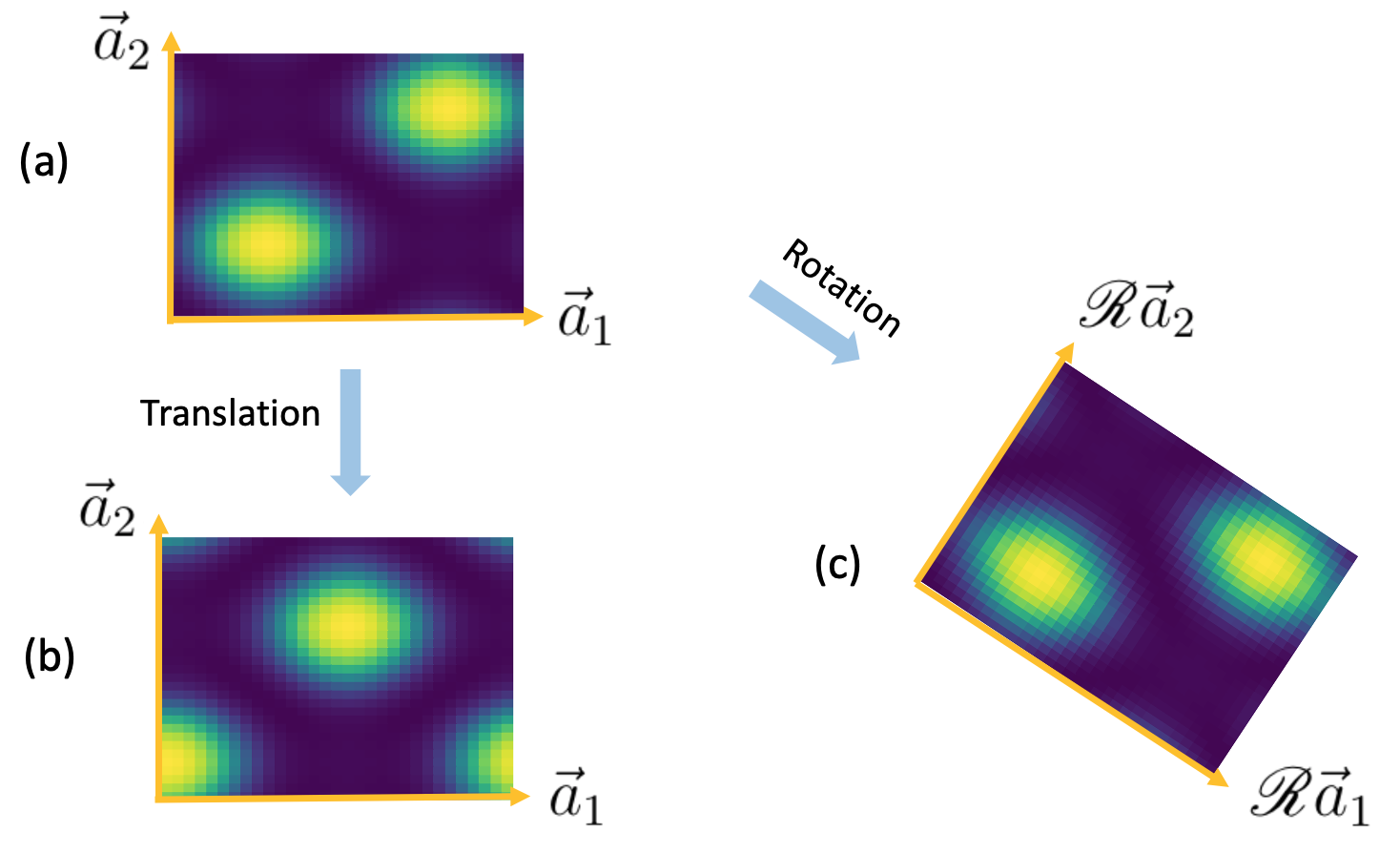}
    \caption{Rotation invariance and translation invariance. \revision{The field theoretic Hamiltonian $H$ must be invariant under spatial translations (2b) and rotations (2c) of the local, average monomer density field $\rho$ of blocks of type $A$ in a diblock copolymer melt (2a). The field $\rho$ is represented as a matrix corresponding to the values of this field at a uniform grid.} }
    \label{fig:inv_rot_tran}
\end{figure}
We opt to represent the 2D density field by the matrix of values of $\rho$ at a uniform (equi-spaced) grid. Mathematically, setting  $x=(\chi N,\vec{a}_1,\vec{a}_2,f,\rho)$, a  rotation $\mathscr{R}$ of the density field  could be represented by 
$$\mathscr{R}x=(\chi N,\mathscr{R}\vec{a}_1,\mathscr{R}\vec{a}_2,f,\mathscr{R}\rho)=(\chi N,\mathscr{R}\vec{a}_1,\mathscr{R}\vec{a}_2,f,\rho),$$
which means that such transformation  is equivalent to rotate just the two cell vectors, given the fixed arrangement of the density field matrix. Thus, the rotation invariance requirement can be succinctly expressed as 
\begin{equation}
\tilde{H}(\mathscr{R}x)=\tilde H(x).
\end{equation}
The enthalpic part of $\tilde{H}$ is rotation invariant  because $|\mathscr{R} \vec{a}_1\times \mathscr{R}\vec{a}_2|=|\vec{a}_1\times \vec{a}_2|$. Therefore, we seek a learner with the property
\begin{equation}
    NN(\mathscr{R}x)=NN(x).
\end{equation}
To achieve this we employ a rotation-invariant representation of ($\vec{a}_1$, $\vec{a}_2$). This is the natural representation $(l_1,l_2,\theta)$, where $\l_1=\|\vec{a}_1\|$, $\l_2=\|\vec{a}_2\|$ and $\theta$ is the angle between $\vec{a}_1$ and $\vec{a}_2$. Since $\|\vec{a}_i\|=\|\mathscr{R}\vec{a}_i\|$, for $i=1,2$  and $\theta(\vec{a}_1,\vec{a}_2)=\theta(\mathscr{R}\vec{a}_1,\mathscr{R}\vec{a}_2)$, using this representation for the input ensures the CNN output preserves strongly rotation invariance. Henceforth,  we set $x= (\chi N,l_1,l_2,\theta,f,\rho)$ and rewrite Eq~\ref{approx_H} as
\begin{align}\label{new_equation}
        \tilde{H}(x) =\chi N f-\frac{\chi N}{|l_1l_2\cos\theta|}\int \rho^2 dr+ NN(x).
\end{align}

Mathematically, the requirement of translation invariance can be expressed as
\begin{equation}
\tilde{H}(\mathscr{T}x)=\tilde H(x),
\end{equation}
where $\mathscr{T}x=(\chi N,l_1,l_2,\theta,f,\mathscr{T}\rho)$ and  $\mathscr{T}\rho(i_0+i,j_0+j)=\mathscr{T}\rho(i_0,j_0)$  for some $(i,j)$ and all  $(i_0,j_0)$. To build a CNN with 
this global property, we use \revision{a special padding technique on the boundary of each layer shown in Figure \ref{fig:circular}}. In the convolutional layer and the pooling layer of neural networks, the element-wise product of the kernel and the input near the boundary  is not well-defined because the boundary element does not have neighbours outside the boundary. The most common solution is to use zero padding, where we assume that there are neighbours with values zero near the boundary to compute the convolution, as shown in Figure \ref{fig:zero}. However, convolution with zero padding does not produce translation invariance. Instead, we endow the CNN with circular padding, which is similar to periodic boundary conditions in PDEs, where we append the values from the opposite boundary of the cell to each boundary. As shown in Figure \ref{fig:circular}, the periodic boundary allows the output tensor to have the same elements with a shift. \revision{In CNN, ``stride" refers to the number of pixels/elements the filter or kernel is moved horizontally or vertically across the input image/matrix. To achieve global translation invariance, } the convolutional layer and pooling layer should have 
stride 1 for their operators to commute with the translation operator. 

 We denote a convolution with size $k$, stride 1 and circular padding as $Conv_{k,1,p(k)}^c()$, where $p(k)$ is the size of padding which equals to $(k-1)/2$ when $k$ is odd and $k/2$ when $k$ is even (the size of padding is derived by how many rows/columns are required to match the kernel size when the boundary element is in the center of the kernel). Similarly, we denote the Average Pooling layer with size $k$, stride 1 and circular padding as $AvgP_{k,1,p(k)}^c()$, the Maximal Pooling layer with size $k$, stride 1 and circular padding as $MaxP_{k,1,p(k)}^c()$, the Global Average layer as $GlA()$, the Global Maximal layer as $GlM()$, the element-wise activation function layer as $AcF()$, (e.g. Relu as $Relu()$, Sigmoid function as $Sigmoid()$, etc.), and the fully connected layers as $FC()$.

\begin{figure}[!htb]
\begin{subfigure}{1.1\textwidth}
  \centering
 \includegraphics[width=.9\linewidth]{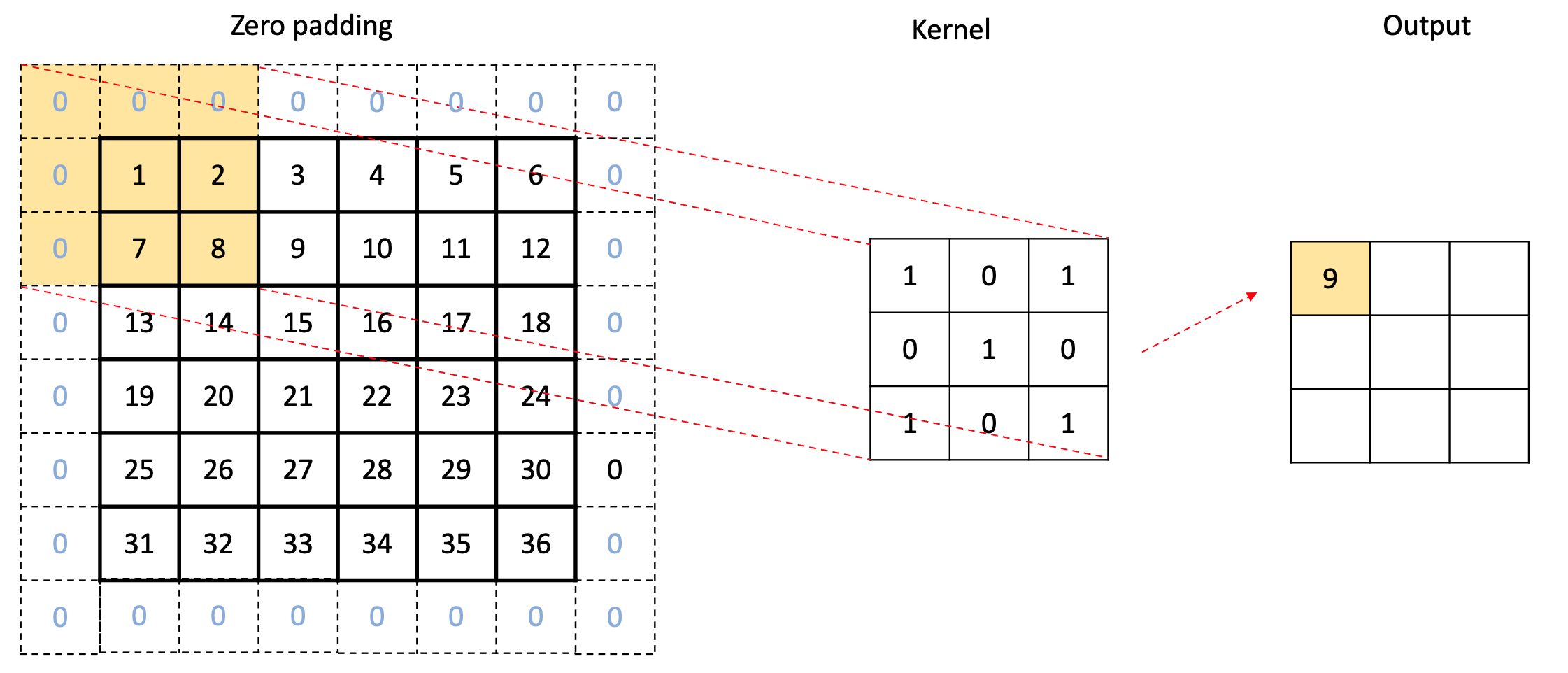}                                                                               
  \caption{Zero padding}
  \label{fig:zero}
\end{subfigure}

\begin{subfigure}{1.1\textwidth}
  \centering
  \includegraphics[width=.9\linewidth]{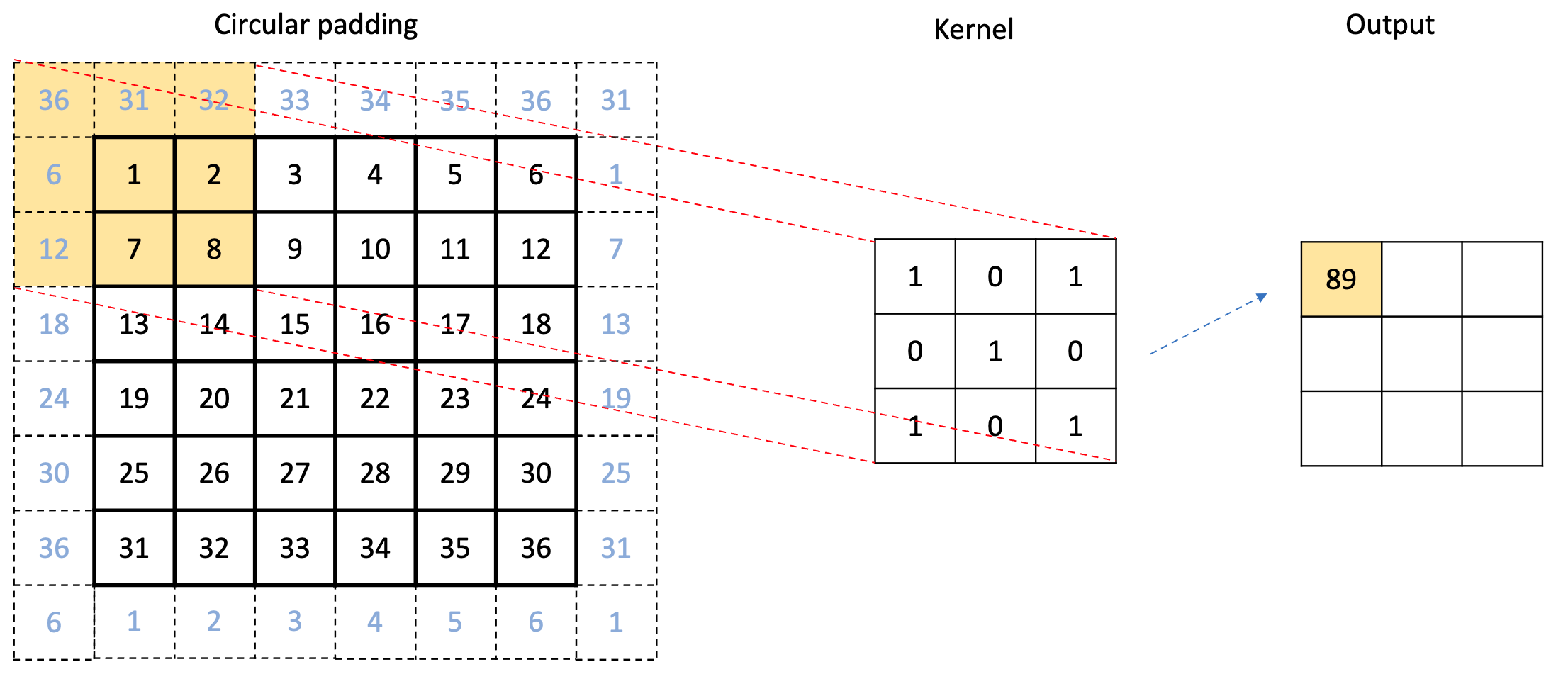}  
  \caption{Circular padding}
  \label{fig:circular}
\end{subfigure}
\caption{\revision{Padding in a convolutional layer. The input matrix is enlarged with the entries needed to perform the weighted average on the values at bottom and top rows and at the left-most and right-most columns. This is called padding. In a convolutional neural network, zero padding (a) is the standard. Here, we propose circular padding (b), which consists of wrapping around the data values, left and right and up and down, effectively enforcing periodic boundary conditions horizontally and vertically.}}
\label{paddings}
\end{figure}

\begin{lemma}\label{lemma:blocks}
$Conv_{k,1,p(k)}^c()$, $AvgP_{k,1,p(k)}^c()$, $MaxP_{k,1,p(k)}^c()$ and $AcF()$ all commute with the translation $\mathscr{T}$. $GlM()$ and $GlA()$ are invariant under translation.
\end{lemma}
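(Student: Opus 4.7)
The plan is to reduce each operator to a uniform local (or global) formula that can be evaluated at every grid point, and then observe that such formulas are manifestly shift-equivariant or shift-invariant. First, I would fix notation: with circular padding and stride~$1$, an input array $X$ indexed by $(i,j)\in\{0,\dots,n-1\}^2$ may be identified with a function on the discrete torus $\mathbb{Z}/n\mathbb{Z}\times\mathbb{Z}/n\mathbb{Z}$, and the translation $\mathscr{T}$ is read as $(\mathscr{T}X)(i,j)=X(i+t_1,j+t_2)$ with indices taken modulo $n$. The hypothesis that $\mathscr{T}\rho(i_0+i,j_0+j)=\mathscr{T}\rho(i_0,j_0)$ in the paper's definition is exactly this cyclic shift once the circular padding is in place.

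Next, I would express each of the four local operators in the common form
\[(OX)(i,j)=\phi\bigl(\{X(i+a,j+b):(a,b)\in W_k\}\bigr),\]
where $W_k$ is the $k\times k$ index window centred at the origin (a single point for $AcF$) and $\phi$ is the appropriate aggregator: a fixed weighted sum for $Conv_{k,1,p(k)}^c$, an arithmetic mean for $AvgP_{k,1,p(k)}^c$, a maximum for $MaxP_{k,1,p(k)}^c$, and a pointwise scalar map for $AcF$. The choice $p(k)=\lfloor k/2\rfloor$ (or $k/2$ when $k$ is even, as defined in the text) guarantees that $W_k$ is centred on every output site, and the circular padding is precisely what makes the shifted indices $i+a,\ j+b$ legitimate torus coordinates uniformly in $(i,j)$. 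Commutation with $\mathscr{T}$ then reduces to a one-line substitution:
\[(O\mathscr{T}X)(i,j)=\phi\bigl(\{X(i+a+t_1,j+b+t_2):(a,b)\in W_k\}\bigr)=(OX)(i+t_1,j+t_2)=(\mathscr{T}OX)(i,j).\]
For multi-channel inputs the same argument is applied channel-wise, since convolutional kernels and poolings act independently in the spatial indices.

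For $GlM$ and $GlA$ the output is a single scalar given by $\max$ or mean of all entries of $X$. These aggregators are symmetric functions of their arguments, so any permutation of the grid indices---in particular a cyclic translation on the torus---leaves the output unchanged, proving $GlM(\mathscr{T}X)=GlM(X)$ and $GlA(\mathscr{T}X)=GlA(X)$.

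The main obstacle, and the reason the lemma is not purely cosmetic, is confirming that the padding convention together with stride~$1$ actually realises the torus identification at every grid point uniformly. Stride~$s>1$ would break equivariance by keeping only a coset of shifts, so that the identity $(OX)(i+t_1,j+t_2)=(\mathscr{T}OX)(i,j)$ would generally fail unless $(t_1,t_2)\in s\mathbb{Z}\times s\mathbb{Z}$; and zero padding would replace the wrapped-around neighbours by $0$, destroying the clean window formula near the boundary. Once these two conventions are in force the commutation and invariance identities follow immediately from the symmetry of the aggregators.
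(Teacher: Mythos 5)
Your proposal is correct and follows essentially the same route as the paper's proof: the circular padding identifies the grid with a discrete torus so that each stride-1 windowed operator commutes with cyclic shifts by a direct index substitution, and the global max/average are invariant because they are symmetric functions of all entries. You simply make explicit, via the window formula $(OX)(i,j)=\phi(\{X(i+a,j+b)\})$, what the paper states informally as the shifted outputs having ``the same set of neighbourhood batches'' with only the starting batch changed.
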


\begin{proof}
Suppose the input tensor is $\revision{X}$ and a spatial translation of $\revision{X}$ is $\mathscr{T} \revision{X}$. Then, after the convolutional layer with stride 1 and circular padding,  $Conv_{k,1,p(k)}^c(\mathscr{T}\revision{X})$ only differs from $Conv_{k,1,p(k)}^c(\revision{X})$ by a shift because they have the same set of neighbourhood batches and only the starting batch changes. Thus,  $Conv_{k,1,p(k)}^c(\mathscr{T}\revision{X})=\mathscr{T} Conv_{k,1,p(k)}^c(\revision{X})$. Similarly,
$MaxP_{k,1,p(k)}^c(\mathscr{T}\revision{X})=\mathscr{T}MaxP_{k,1,p(k)}^c(\revision{X})$, and $AvgP_{k,1,p(k)}^c(\mathscr{T}\revision{X})=\mathscr{T}AvgP_{k,1,p(k)}^c(\revision{X})$. Now, since $AcF$ is an element-wise operation, $AcF(\mathscr{T}\revision{X})=\mathscr{T}AcF(\revision{X})$. Finally, it follows trivially that $GlM(\mathscr{T}\revision{X})=GlM(\revision{X})$ and $GlA(\mathscr{T}\revision{X})=GlA(\revision{X})$.
\end{proof}
We design the CNN with the following architecture: 
\begin{align}\label{eq:architecture}
     \rho \rightarrow \underbrace{Conv_{5,1,2}^c() \rightarrow Relu()}_{\times 5} \rightarrow GlA()\nonumber 
    \rightarrow \text{concat with }(\chi N,l_1,l_2,\theta,f)\rightarrow \underbrace{ FC()\rightarrow Relu()}_{\times 3}\rightarrow FC(),
\end{align}
where ``concat with $(\chi N,l_1,l_2,\theta,f)$" has the following meaning.
Following the application of $GlA()$ or $GlM()$, the output consists of multiple channels, each represented by a single number (either the global maximum or global average). These can be denoted as $(c_1, c_2, ... c_m)$, where $m$ is the total number of channels. We then concatenate the vector of all the channels with the parameters to get $(c_1,c_2,...c_m,\chi N, l_1,l_2,\theta,f)$, and put this into several fully connected layers to make the final predictions. Hyperparameters like the number of channels in each convolutional layer and the number of neurons in each fully connected layer are in Appendix \ref{appendix:hyperparameter}.

\begin{theorem}
The convolutional neural network (\ref{eq:architecture}) has shift invariance with respect to $\rho$.
\end{theorem}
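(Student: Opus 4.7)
The plan is to propagate a translation $\mathscr{T}\rho$ through the architecture (\ref{eq:architecture}) one block at a time and show that by the time the signal reaches the global average pooling stage the dependence on $\mathscr{T}$ has been absorbed, so the concatenation with the (translation-independent) parameters and the fully connected tail see identical inputs in the two cases. The whole argument reduces to composing the per-layer facts already collected in Lemma~\ref{lemma:blocks}.

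Concretely, I would first handle the convolutional trunk by induction on depth. Each $Conv_{5,1,2}^c \rightarrow Relu$ block commutes with $\mathscr{T}$: the circular-padded, stride-$1$ convolution does so by Lemma~\ref{lemma:blocks}, and the element-wise ReLU does so by the same lemma applied with $AcF=Relu$. Hence, iterating five times, the feature map $X_5$ produced from $\mathscr{T}\rho$ equals $\mathscr{T}X_5(\rho)$. Next I invoke the invariance part of Lemma~\ref{lemma:blocks}: $GlA(\mathscr{T}X_5(\rho)) = GlA(X_5(\rho))$, so the pooled channel vector $(c_1,\ldots,c_m)$ is the same for $\rho$ and $\mathscr{T}\rho$. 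Since concatenation with $(\chi N, l_1, l_2, \theta, f)$ depends only on the cell parameters, the input to the FC tail agrees as well, and the subsequent three $FC\rightarrow Relu$ blocks and the final $FC$ layer produce the same scalar output $NN(\mathscr{T}x)=NN(x)$. Combined with the manifest $\mathscr{T}$-invariance of the explicit terms in (\ref{new_equation}) (the term $\chi N f$ does not involve $\rho$, and $\int \rho^2\,dr$ is a sum of squared entries, hence permutation invariant), this yields $\tilde H(\mathscr{T}x)=\tilde H(x)$.

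The hard part, to the extent that there is one, is purely bookkeeping: I need to verify that the notion of cyclic translation used on the input grid is preserved through the intermediate feature maps, so that $\mathscr{T}$ really factors out cleanly rather than becoming some distorted shift between layers. This is precisely what the stride-$1$ and circular-padding choices secure (and why they were imposed in Lemma~\ref{lemma:blocks}); any pooling or convolution with stride larger than $1$, or any use of zero padding, would break the commutation and the argument would collapse at the offending layer. Beyond that verification, the proof is a straightforward inductive composition.
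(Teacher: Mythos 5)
Your argument is correct and is essentially identical to the paper's proof: both propagate the translation through the convolutional trunk using the commutation facts of Lemma~\ref{lemma:blocks} (with $AcF=Relu$, $k=5$, $n=5$), absorb it at the global average pooling step via the invariance part of that lemma, and observe that the concatenation and fully connected tail then see identical inputs. Your added remarks on the invariance of the explicit enthalpic terms in $\tilde H$ go slightly beyond the stated theorem but match what the paper concludes immediately afterward.
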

\begin{proof}
From Lemma \ref{lemma:blocks}, 
\begin{align}
    GlA \circ (AcF \circ Conv_{k,1,p(k)}^c)^n (\mathscr{T} \rho)&= GlA \circ (\mathscr{T}\circ(AcF \circ Conv_{k,1,p(k)}^c)^n ( \rho))\\
    &=GlA \circ (AcF \circ Conv_{k,1,p(k)}^c)^n ( \rho),
\end{align}
 for any $n$, where $\circ$ stands for composition of operators. Since each channel outputs only one single number in the end, the fully connected layers preserves the translation invariance. Our case is a special case with $AcF=Relu$, $k=5$ and $n=5$, which inherits the translation invariance.
\end{proof}

Note that we could also add pooling layers with circular padding into the neural network, e.g. $AvgP_{k,1,p(k)}^c()$, $MaxP_{k,1,p(k)}^c()$, and the network would
 still preserve  translation invariance according to Lemma \ref{lemma:blocks}. Since the architecture   (\ref{eq:architecture}) without the pooling layers after convolutional layers already works very well we did not include the pooling layers for computation efficiency. 

Based on  Eq.~(\ref{approx_H}) and the considerations above for $NN(\cdot)$  we have
constructed an approximation $\tilde{H}$ to the Hamiltonian with global translation and rotation invariance, i.e.
\begin{align*}
    \tilde{H}(\mathscr{R}x)=\tilde H(x),\\
    \tilde{H}(\mathscr{T}x)=\tilde H(x).
\end{align*}

\begin{subsection}{Convolutional Neural Network in Sobolev Space}
We propose to train in Sobolev space the rotation and translation invariant CNN described in~\ref{subsec:inv} 
to fully solve Problem 1 and thus obtain a powerful tool for the solution of  Problem 2.

As in the one-dimensional case, the training data consists only of saddle points, at which the Hamiltonian has vanishing gradients. We encode this valuable information into the CNN's objective  function (also called cost or loss function) to train an approximation $\tilde{H}$ that matches both the Hamiltonian and its gradient. This approach has several advantages over the more common $L^2$ training. For instance, it makes the prediction more accurate in the vicinity of saddle points (by imitating the gradient behaviour of the true system) and it ensures the approximation has vanishing gradients on saddle points which is particularly significant for the predicted field in Problem 2. 

Considering the vanishing gradient at saddle points, we select CNN's parameters to minimize the objective function 
\begin{equation}\label{2d:cost}
        C(\alpha)=\sum_{i=1}^{N_T}\left(\tilde{H}(x_i)-H_i\right)^2 +\beta\sum_{i=1}^{N_T}\left\|\nabla_{\rho} \tilde{H}(x_i)\right\|^2,
\end{equation}
where \revision{$N_T$ is the size of the training set, } $\beta$ controls the importance of the penalty terms on gradients and $\alpha$ represents parameters in the CNN. It favours vanishing gradients at saddle points and also serves as a regularization for smoothness.

For each data point $(x_i,H_i)$, according to  Eqs. (\ref{h_represent}) and (\ref{approx_H}), $NN(x_i)$ should match 
\begin{equation}\label{remainder_formula}
     R_i = H_i -\chi N_if_i+\frac{\chi{N_i}}{|l_{1i}l_{2i}\cos{\theta_i}|}\int \rho_i^2 dr.
\end{equation}
Based on the spatial discretization, 
 \begin{align}
   \nabla_{\rho} \tilde{H}(x_i)&=-\frac{2\chi{N_i}}{l_{1i}l_{2i}} \, \Delta l_{1i} \, \Delta l_{2i} \, \rho_i +\nabla_{\rho}NN(x_i), \label{eq:2d_gradrho}
\end{align}
where $\Delta l_1$ and $\Delta l_2$ are the spatial mesh size of the $l_1$ direction and $l_2$ direction, respectively. $\cos \theta_i$ in (\ref{remainder_formula}) is cancelled with that in the area terms of small parallelograms in the discretized integral. Then we rewrite the cost 
function (\ref{2d:cost}) as 
\begin{equation}\label{2d:costnn}
  C(\alpha)=\sum_{i=1}^{N_T}\left(NN(x_i)-R_i\right)^2 +\beta\sum_{i=1}^{N_T}\left\|\nabla_{\rho} NN(x_i)-\frac{2\chi{N_i}}{l_{1i}l_{2i}} \, \Delta l_{1i} \, \Delta l_{2i} \, \rho_i\right\|^2,
  \end{equation}
where the  norm in the second term of (\ref{2d:costnn}) is the Frobenius norm of matrices.  $C(\alpha)$ is in the form of the (squared) Sobolev norm and we aim to train a map that matches both the functional and its gradients by minimizing this objective function. At the same time, with the CNN architecture (\ref{eq:architecture}), the approximation has rotation and translation invariance. Putting these two elements together, $\tilde{H}$ provides the desired solution to 
Problem 1 and the simultaneous accurate approximation of gradients near saddle points
will prove useful in the solution of Problem 2, i.e. the ultimate density predictions.

\end{subsection}

\begin{section}{Methodology: Density Field Prediction by Generative Adversarial Networks}\label{sec:2d_gan}
 In this section, we
focus on the solution to Problem 2 by employing a generative adversarial network (GAN) and the Sobolev space-trained CNN of Problem 1. 

\subsection{GAN, cGAN and DCGAN}
Our methodology to predict saddle point densities is built by merging  GAN, cGAN (Conditional GAN) and DCGAN (Deep Convolutional GAN). We briefly introduce next 
 \revision{these} three architectures and the motivation to apply them for density predictions. 

A GAN \cite{NIPS2014_5ca3e9b1} is a \revision{neural network based on a two-player minimax game model: a generator (G), which is trained to generate from random noise to the images similar to the images in the training set and a discriminator (D), which is trained to estimate the probability that the input image is from the training set rather than from G}. After several rounds of training, G is expected to generate images which could fool D and D is expected to tell the difference from the true image and the generated (``fake") image. \revision{Thus, after a successful training}, G takes random noise as input and generates images that imitate the training set. For instance, if the training set are pictures of human faces, G is expected to generate pictures of human faces because it learned how to extract the latent features of these type of images. \revision{G and D play a two-player minimax game with  a value function $V(G,D)$:}
\begin{equation}\label{loss_vanila_gan}
\min_G\max_D V(D,G)=E_{x \sim p_{data}(x)}[\log D(x)]+E_{z \sim p_z(z)}[\log (1-D(G(z)))],
\end{equation}
where $D(x)$ is D's estimate of the probability that data $x$ is real (i.e. from the training set), 
$G(z)$ is G's output given noise z, and $E_{x \sim p_{data}(x)}$ and $E_{z \sim p_z(z)}$ are the expected values over all real data and over all the random inputs to G, respectively. Based on this loss function, D is trained to maximize the probability that $D(x)=1$ for all the $x$ in the training set and the probability that $D(G(z))=0$ for all the images generated by G. G is trained to minimize the probability that D identifies the fake images and fools D by driving $D(G(z))$ to 1. 

Conditional GAN (cGAN) was proposed~\cite{mirza2014conditional} to send labels to G for the generation of images corresponding to the label. The architecture and training of a cGAN is similar to those of GAN, except that G has an extra input, the labels. For example, for the cGAN trained on  MNIST
, which is a dataset containing the images of 60K handwritten digits,  G has two types of inputs: random noise and labels of the digits. After training, G is expected to generate the digits corresponding to the true label.

DCGAN~\cite{radford2015unsupervised} is an extended version of GAN, with an architecture where D is mainly made up of convolutional layers and G is mostly composed of  transposed convolutional layers. A transposed convolutional layers is an upsampling layer, whose output has larger dimension size than the input. A detailed introduction of transposed convolutional layer \revision{can be found in \cite{dumoulin2016guide}.}
Since the transposed convolutional layers have the ability to expand dimensions, we can view the noise vector of dimension $n$ as an image with $n$ channels and size $1\times 1$ in each channel. Then by repeating transposed convolutions, each channel can be expanded to the image size, e.g. $32\times 32$. With the transposed convolutional layers in G and convolutional layers in the D, DCGAN has the potential to  generate higher quality images.

\subsection{ScftGAN}\label{sec:Scftgan}
As described above,  GAN-based models attempt to learn a map from latent feature space to a manifold of images with similar styles or features. \revision{They generate data from this manifold using a random noise with fixed dimension as an input from latent feature space}. In  Problem 2, \revision{the prediction of a saddle point density field}, we can regard all the density fields as points on a high dimensional manifold $\mathcal{M}_s$. 
\revision{We propose to employ a cGAN to learn the map from a latent feature space to a manifold approximating $\mathcal{M}_s$ and use G to generate density fields from it. 
In order to predict density fields for a specific combination of parameters, ($\chi {N}^*$, $l_1^*$, $\l_2^*$, $\theta^*$, $f^*$), we  also pass these parameters to G as input. An illustration of cGAN prediction of hand-written digits and of density fields by cGAN is shown in Fig.~\ref{fig:anology}. After training with the structure of $\mathcal{M}_s$ together with the relationship between parameters and density fields, G is expected to generate density fields for the given parameters from random noise.}

\begin{figure}
    \centering
    \includegraphics[width=.95\linewidth]{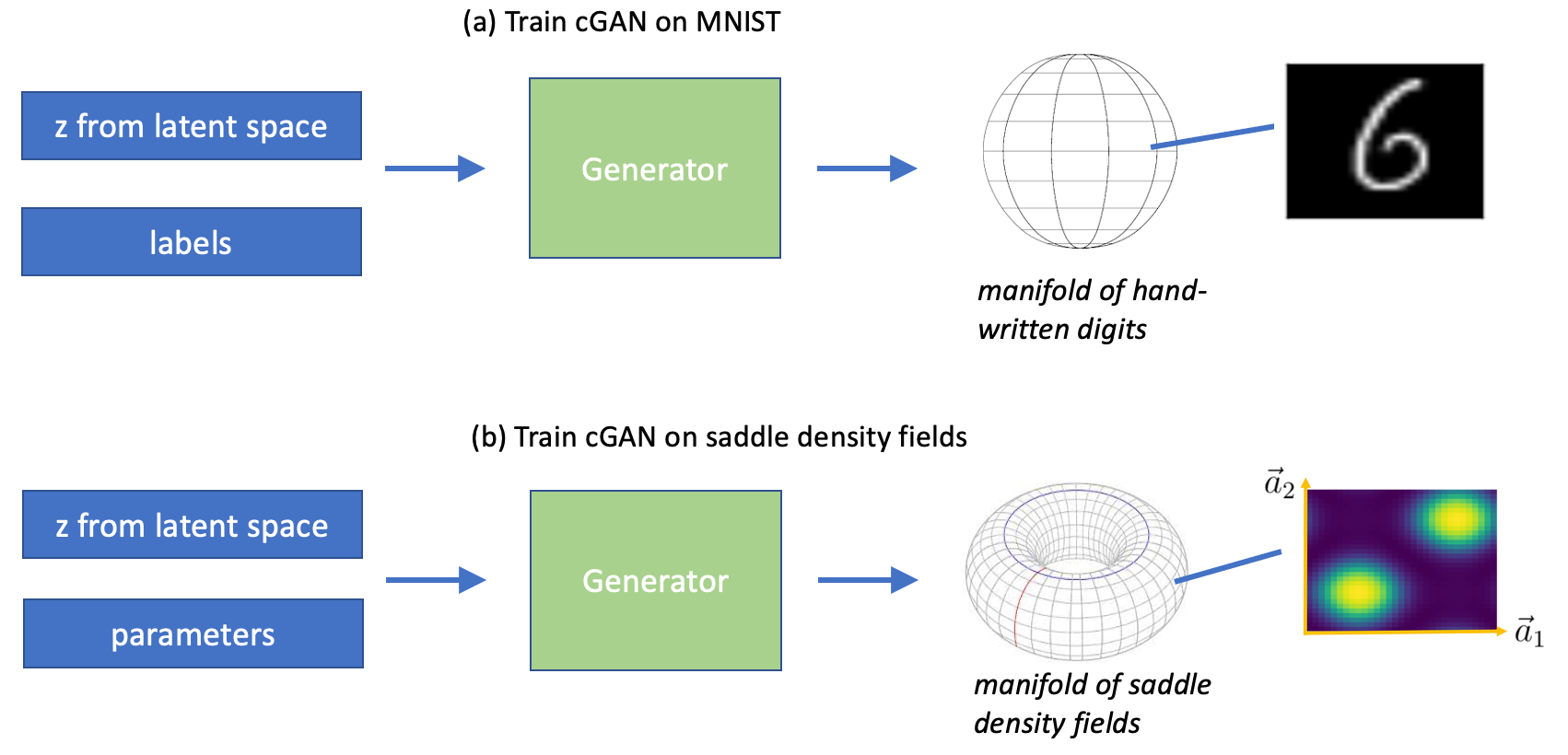}
    \caption{\revision{Illustration of conditional generative adversarial network (cGAN) for generation of (a) handwritten digits and (b) density fields. After successful training, the cGAN 
    receives as input a random noise $z$ and a label, for example label 6 in (a) or the parameters ($\chi {N}^*$, $l_1^*$, $\l_2^*$, $\theta^*$, $f^*$) in (b), and generates data similar to the real data with that label. Effectively, a trained cGAN learns a map from a compressed set of data features (latent space) to a manifold of data which closely resembles the real data.}}
    \label{fig:anology}
\end{figure}

The new GAN architecture  we propose for density field prediction, which we henceforth call ScftGAN, is shown in Fig.~\ref{fig:scftgan}. G generates ``fake" density fields from random noise and the parameters ($\chi {N}^*$, $l_1^*$, $\l_2^*$, $\theta^*$, $f^*$). The fake density fields together with the parameters (shown as Information in Fig.~\ref{fig:scftgan}) are sent to D. These instances are labeled ``fake" to train D. The true density fields are also sent to D, labeled as ``true" during  the training process. With the education from the true and fake instances, D is trained to accurately classify them. In turn, the results of D are  utilized to update G in a stochastic gradient descent method, by maximizing the probability that D identifies as ``true" the fake instances (generated by G). 
 To train D with the relationship between parameters and the corresponding density field, we also send the true density field with  shuffled Information (parameters), i.e. assigning a random Information (parameters) from the pool to the density fields, and penalizing the probability that D classify these as true pairs. This is to avoid the risk that D makes a decision without considering the parameters. The knowledge that only density fields corresponding to the input parameters should be generated, is enhanced in training.  Furthermore, given that both the  Hamiltonian CNN predictor and the ScftGAN are expected to predict accurately around a vicinity of the training set, we anticipate that the output of ScftGAN will produce a value of  $\tilde{H}$ close to the true Hamiltonian after sending it to the CNN in case the  ScftGAN predictions are far from the training domain. 
 
 The loss function for our proposed ScftGAN is 
\begin{align}\label{scftgan_loss}
\min_G\max_D V(D,G)=2*E_{\rho \sim p_{data}(\rho)}[\log D(\rho |y)]+E_{z \sim p_z(z)}[\log(1-D(G(z|y)|y))]\nonumber\\
+E_{\rho \sim p_{data}(\rho)}[\log (1 - D(\rho|S(y)))]+\lambda E_{z\sim p_z(z)}[H(y)-\tilde{H}(G(z|y))]^2,
\end{align}
where $y$ is the tuple of parameters ($\chi {N}$, $l_1$, $\l_2$, $\theta$, $f$), $S(y)$ means a random shuffle of all the Information (parameters) in the data set (a random permutation of Information among data points), and 
$\lambda$ is a hyperparameter controlling how far away is the output  from the vicinity of the training set.
The first term is doubled to balance the ratio of true labels and false labels (both the second term and third term correspond to false labels).
Compared with the loss function (\ref{loss_vanila_gan}) of the classical GAN, parameter information is sent to both G and D, and the additional term enforces D to output``0" for the input of wrong pairings of parameters and density fields, so that D can learn the correct relationship between parameters and  density fields.
\begin{figure}
    \centering
    \includegraphics[scale=0.7]{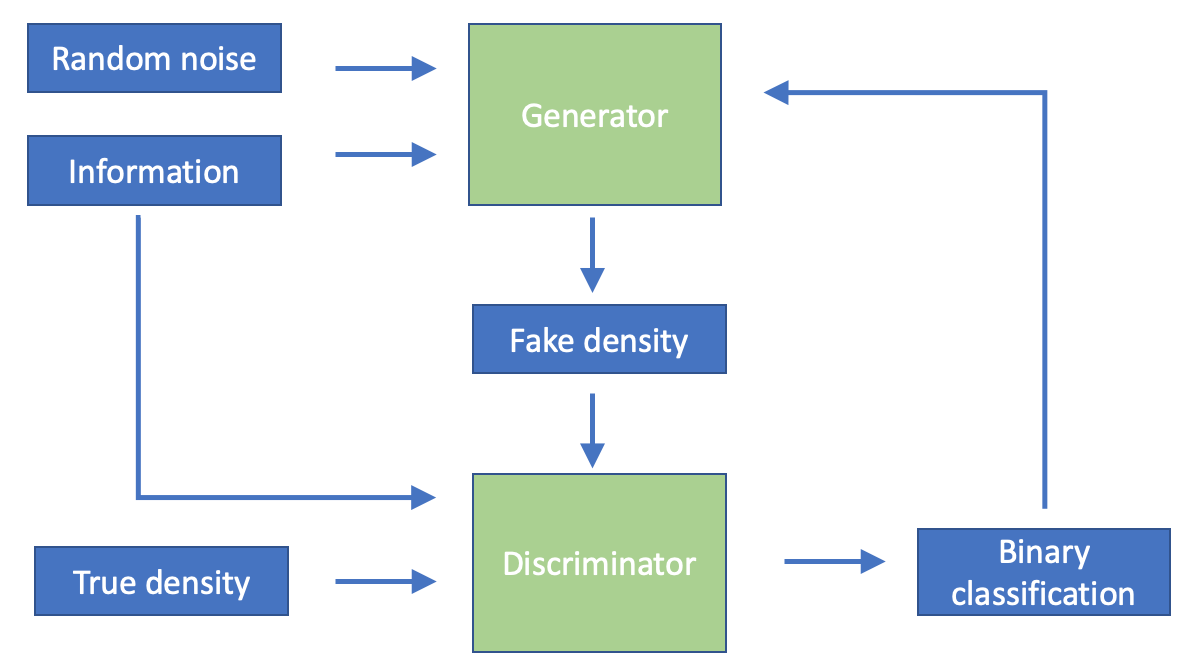}
    \caption{ScftGAN, \revision{a conditional generative adversarial network (cGAN) for polymer self consistent field theory (SCFT).  The information block represents the diblock parameters ($\chi {N}$, $l_1$, $\l_2$, $\theta$, $f$). During training, the generator generates ``fake" density fields from random noise and the given parameters. The generated field and the parameters are passed to the discriminator, which is also fed with the true density field. With this information, the discriminator is trained to accurately classify fake or true. In turn, the results of the discriminator (a binary classification) are  utilized to update the generator by maximizing the probability of fooling the discriminator. After successful training, the generator produces density fields which closely resemble those originating from direct SCFT.} }
    \label{fig:scftgan}
\end{figure}
The architecture of G is :
\begin{align}\label{scftgan_generator}
    z&\rightarrow \text{concat with }y \rightarrow ConvT_{4,1,0}() \rightarrow Bnorm() \rightarrow Relu() \nonumber\\
    &\rightarrow \text{concat with }y 
   \underbrace{\rightarrow ConvT_{4,2,1}() \rightarrow Bnorm() \rightarrow Relu()}_{\times 2} \nonumber\\
    &\rightarrow \text{concat with }y \rightarrow ConvT_{4,2,1}() \rightarrow Bnorm() \rightarrow Relu()\nonumber\\
    &\rightarrow \text{concat with }y 
   \underbrace{\rightarrow ConvT_{5,1,2}() \rightarrow Bnorm() \rightarrow Relu()}_{\times 6}\nonumber\\
    &\rightarrow ConvT_{5,1,2}()\rightarrow Sigmoid(), 
\end{align}
\revision{where $z$ is random noise (as Appendix \ref{appendix:hyperparameter} shows, we set the dimension of noise 16)}, $ConvT_{m,n,k}()$ stands for transposed convolution of size m, stride n, and zero padding of size k. Transposed convolution is an upsampling operation, for instance, $ConvT_{4,1,0}()$ with kernel size 4, stride 1 and no padding will expand a image of $1\times1$ (the initial size of each channel) to $4 \times 4$. Then each transposed convolution $ConvT_{4,1,0}$ will double the image size because stride is 2 and the image size of each channel  achieves $32 \times 32$ after 3 operations. The following transposed convolution $ConvT_{5,1,2}$ will not change the image size of each channel. $Bnorm()$ represents batch normalization, $LeakyRelu()$ is an activation function where input value less than zero is multiplied by a fixed scale factor, ``$\text{concat with }y$'' refers to the operation that treats each parameter in $y$ as a constant channel and append the parameter channels to the input tensor. We frequently repeat inputting the parameters to the neural networks to stress the importance of parameters in density fields prediction.

The architecture of D is: 
\begin{align}\label{scft_discriminator}
    \rho &\rightarrow \text{concat with }y \rightarrow Conv_{3,1,1}() \rightarrow LeakyRelu()\nonumber\\
       &\underbrace{\rightarrow Conv_{4,2,1}() \rightarrow Bnorm() \rightarrow LeakyRelu()}_{\times 3} \nonumber\\
       &\rightarrow Conv_{4,1,0}() \rightarrow Sigmoid().
\end{align}
The output is a number between 0 and 1 which measures the probability that the input density field $\rho$ is the field of the corresponding parameters  $y$.

The hyperparameters of $G$ and $D$, like learning rate, number of channels in convolution/transposed convolution layers, etc., are listed in Appendix \ref{appendix:hyperparameter}.

After training, given specific parameters ($\chi {N}^*$, $l_1^*$, $l_2^*$, $\theta^*$, $f^*$), GAN can be used to generate a batch of candidates for density fields by taking various inputs of random noise, as we show in Section \ref{sec:2d_result}. Then, as we detail next,  we can employ the Sobolev space-trained CNN  to select an optimal density field, with optional fine-tuning, and predict the corresponding  Hamiltonian.

\begin{subsection}{Density Field Prediction}

Since we perform Sobolev space-training for the approximation $\tilde{H}$ of the   Hamiltonian, the true  density fields will yield small or vanishing gradients of $\tilde{H}$. Therefore, we plug all the density fields generated by ScftGAN into $\tilde{H}$ and compute  $\nabla_{\rho}\tilde{H}$. We select the density field in the pool with the smallest $\|\nabla_{\rho} \tilde{H}\|$ as the density field prediction. The proposed procedure of density prediction is illustrated in Fig.~\ref{fig:saddle_logic}.
\begin{figure}[!htp]
    \centering
    \includegraphics[width=.9\linewidth]{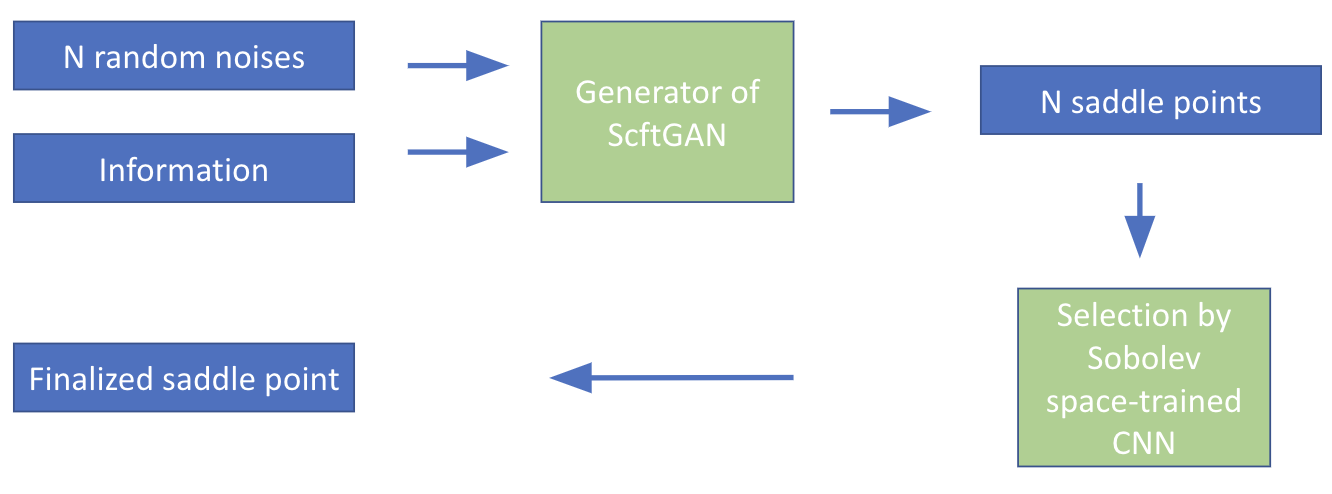}
    \caption{Procedure for density field prediction. \revision{A batch of $N$ density fields (the $N$ saddle points block) is generated with the ScftGAN from an input of $N$ random noises and the given diblock parameters (information block). Then, the Sobolev space-trained convolutional network is employed to expeditiously evaluate $\nabla_\rho \tilde{H}$ for each field in the batch. The final prediction is the field with the smallest $\|\nabla_\rho \tilde{H}\|$ .} }
    \label{fig:saddle_logic}
\end{figure}  

It is important to contrast the current procedure with the one we employed in the 1D setting~\cite{xuan2021deep}.  In the latter we selected from the training set the density field with the smallest $\|\nabla_{\rho}\tilde{H}\|$ as an initial density field   and then use the Sobolev space-trained $\tilde{H}$ to improve this initial guess via gradient descent: 
\begin{equation}\label{GD}
    \rho^{n+1}=\rho^{n}-\epsilon \nabla_{\rho} \tilde{H}(x^{n}).
\end{equation} 
There are several advantages of the new GAN-based approach over the 1D procedure for obtaining a density prediction.  
 First, there is now no need to memorize the training set and this saves an enormous amount of computer memory.   
 Second, the strong dependence of performance on the training set is relaxed so that the algorithm is more robust. Third, the \revision{ScftGAN} could generate multiple candidates given different values of noise, which broadens the diversity of the candidate pool and consequently increases the probability to discover new polymer phases. Fourth, we could use the result of ScftGAN as an initial guess for the gradient descent ({GD}).
 However, as the results in Section~\ref{sec:2d_result} show,  ScftGAN itself already generates excellent results without that fine-tuning step.
 
 The memory and time complexity comparison of the training set-based method with fine-tuning in~\cite{xuan2021deep} and the new ScftGAN method are shown in Table~\ref{compare_complexity}. The training set-based method has $O(N_T)$ memory cost ($N_T$ is the number of points in the training set) 
 and a time cost $O(N_T)+O(K)$ for the evaluation and selection of initial density fields and for $K$ steps in fine-tuning by the gradient descent method.  In contrast,  ScftGAN-based method only needs to store the constant number of parameters, which are independent of the size of training set. In terms of time cost, ScftGAN-based method only needs to evaluate and select candidates in the pool which is a small constant compared to $N_T$.
\begin{table}[htp!]
\centering
\begin{tabular}{lcc}
\hline
                 & Training set-based selection+ fine-tunning & ScftGAN \\ \hline
Memory cost      & $O(N_T)$                             & $O(1)$  \\ \hline
Time cost & $O(N_T)+O(K)$                        & $O(1)$  \\ \hline
\end{tabular}
\caption{Memory and computational cost. $N_T$ is the size of training set and $K$ is the number of gradient descent steps.}
\label{compare_complexity}
\end{table}

\subsection{Algorithm for the Overall ML Methodology}
We summarize in Algorithm~1 the proposed ML methodology for  Hamiltonian and saddle point density predictions. \revision{In step 1, we generate the dataset  (training set, validation set and test set) by solving SCFT equations.} In step \revision{2}, we learn a surrogate  Hamiltonian $\tilde{H}$ as the sum of the enthalpic part and a CNN remainder trained in Sobolev space, which strong, global translation and rotation invariance.  In step \revision{3}, we train a ScftGAN, consistent with $\tilde{H}$, whose generator G has the capability to predict  density fields from parameters $\chi {N}^*$, $\vec{a}_1^*$, $\vec{a}_2^*$, $f^*$ and random noise. In step \revision{4}, we generate a pool of  density field candidates and from this pool we select the field $\rho^{ML}$  that produces the smallest $\|\nabla_{\rho} \tilde{H}\|$.
 In the last step, we plug $\rho^{ML}$ into $\tilde{H}$ to predict the corresponding value of the  Hamiltonian. 

\begin{algorithm}[!htp]\label{algo_2d}
\setstretch{1.25}
\SetAlgoLined
\textbf{\revision{1. Generate the dataset consisting of data tuples $(\chi N, \vec{a}_1, \vec{a}_2, \rho, H)$ by solving   for SCFT system.}}\\
\textbf{\revision{2}. Train the  Hamiltonian surrogate $\tilde H$:}\\
Train
\begin{align*}
    \tilde{H}(x) =\chi N f-\frac{\chi N}{|\vec{a}_1\times \vec{a}_2|}\int \rho^2 dr+ NN(x), 
\end{align*}
by finding $NN(x)$ that minimizes the loss function (\ref{2d:costnn}).\\
\textbf{\revision{3}. Train the  density field predictor ScftGAN:}\\
 Train ScftGAN by minimizing (\ref{scftgan_loss}) using $\tilde{H}$ in Step 1 and get the corresponding density generator $G$  (\ref{scftgan_generator}).\\
\textbf{\revision{4}. Prediction of density corresponding to $\chi {N}^*$, $\vec{a}_1^*$, $\vec{a}_2^*$, $f^*$:}\\
\For{\text{Random noise} $n_i$, $1\leq i \leq N$}
{Evaluate $\rho_i=G(\chi {N}^*, \vec{a}_1^*, \vec{a}_2^*, f^*, n_i)$ and $\nabla_{\rho}\tilde{H}(\chi {N}^*, \vec{a}_1^*, \vec{a}_2^*, f^*,\rho_i)$\\
}
$\rho^{ML}=\argmin_{\rho_i}\|\nabla_{\rho}\tilde{H}(\chi {N}^*, \vec{a}_1^*, \vec{a}_2^*, f^*, \rho_i)\|$.\\
 \textbf{\revision{5}. Prediction of the corresponding  Hamiltonian:}
 $$H^{ML}=\tilde{H}(\rho^{ML}).$$
 \caption{ML method to learn the  Hamiltonian and to obtain a  density field prediction.}
 \label{algorithm}
\end{algorithm}

The Sobolev-space trained Hamiltonian predictor $\tilde{H}$ and saddle density field predictor ScftGAN can work together as an end-to-end machine learning solution for downstream tasks, like the optimal cell size and shape optimization that people are generally interested in. The methods and experiments to the application are shown in Section  \ref{heatmap_screening} and Section \ref{efficient_heatmap}.

\end{subsection}

\end{section}

\begin{section}{Results}\label{sec:2d_result}
In this section, we present results of numerical experiments for a two-dimensional AB diblock system to validate the efficacy and efficiency of the proposed methods. The data are generated for the following, common range of parameters, $\chi N = 16$, $l_1 \in [3,5.5]$, $l_2 \in [3,5.5]$, $\theta \in [\pi / 2, 5\pi/6]$, $f \in [0.3, 0.5]$. We sample data points on equidistributed nodes in the given interval of each parameter by running a direct SCFT solver to compute the corresponding  density fields and the  Hamiltonian. The modified diffusion equations are solved using periodic boundary conditions and pseudo-spectral collocation in space with $32\times 32$ mesh points in 2D.  Auxiliary fields, initialized with smooth fields with a fixed number of periods, are relaxed to saddle-point configurations using the semi-implicit Seidel iteration~\cite{CF2004}. The cell lengths $l_1$, $l_2$ are specified rather than relaxed to the value that minimizes the  Hamiltonian. After building the dataset (24167 samples), we randomly pick $70\%$  of the data as training set, $15\%$  as the validation set, and $15\%$ as the test set. In the generated dataset there are hexagonal cylinders (HEX), lamellar phase (LAM) and square-packed cylinders. Our experimental results demonstrate the capability of the  Sobolev space-trained CNN to accurately approximate the  Hamiltonian for both phases, without having to train separately a learner for each phase.

\subsection{Approximation of the  Hamiltonian}
We summarize next numerical results on the accuracy of the Sobolev space-trained CNN to approximate the  Hamiltonian and its gradient. Table \ref{tab:cnn} lists the root mean square error for these quantities. The Sobolev space-trained CNN achieves 3 digits of accuracy on $H$ prediction and 2 digits of accuracy on its gradient.
\begin{table}[htp!]
\centering
\begin{tabular}{lcc}
\hline
             & Hamiltonian Error & Hamiltonian Gradient Error \\ \hline
Training set & 0.00601            & 0.01173                     \\ \hline
Test set     & 0.00605            & 0.01171                     \\ \hline
\end{tabular}
\caption{(Root Mean Square) Error of the CNN}
\label{tab:cnn}
\end{table}
As Fig.\ref{fig:2d_h_compare} shows, this excellent accuracy holds for the entire range of parameters. This figure was generated by  
randomly picking 25 data points form the test set and plotting both the predicted and the SCFT  Hamiltonian. 
Figure \ref{fig:2d_h_compare} shows that the CNN-predicted prediction is accurate throughout the entire range of parameters. 
\begin{figure}[!htp]
    \centering
    \includegraphics[width=.9\linewidth]{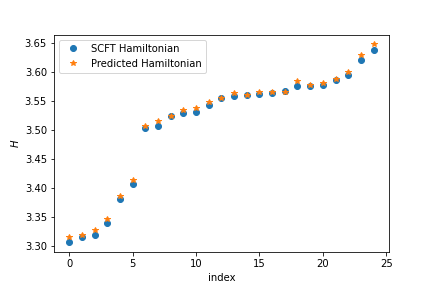}
    \caption{Comparison of \revision{the} SCFT Hamiltonian and its \revision{convolutional network} approximation for a random set of 25 data points. \revision{The data points are sorted by the SCFT Hamiltonian vaules.}}
    \label{fig:2d_h_compare}
\end{figure}

\revision{We also conduct a thorough study on the relationship of prediction accuracy and training set size. Detailed results and discussions are in Appendix \ref{train_set_size_analysis}.}

\subsection{Density Field Predictions}
The trained ScftGAN takes random noise and the parameter set ($\chi {N}^*$, $l_1^*$, $\l_2^*$, $\theta^*$, $f^*$) as input. The training process is illustrated in Fig.~\ref{gan_train}.
We choose three representative density fields, shown as rows,  for this figure. The first column corresponds to the (``true")  density computed directly by SCFT simulation. The other columns are the ScftGAN predicted density, with the same input noise, after different number of training epochs.  In the beginning (epoch 7), the ScftGAN predicts random density fields but the predictions significantly improve as the number of training epochs increases.

\begin{figure}[!htp]
    \centering
    \includegraphics[width=.9\linewidth]{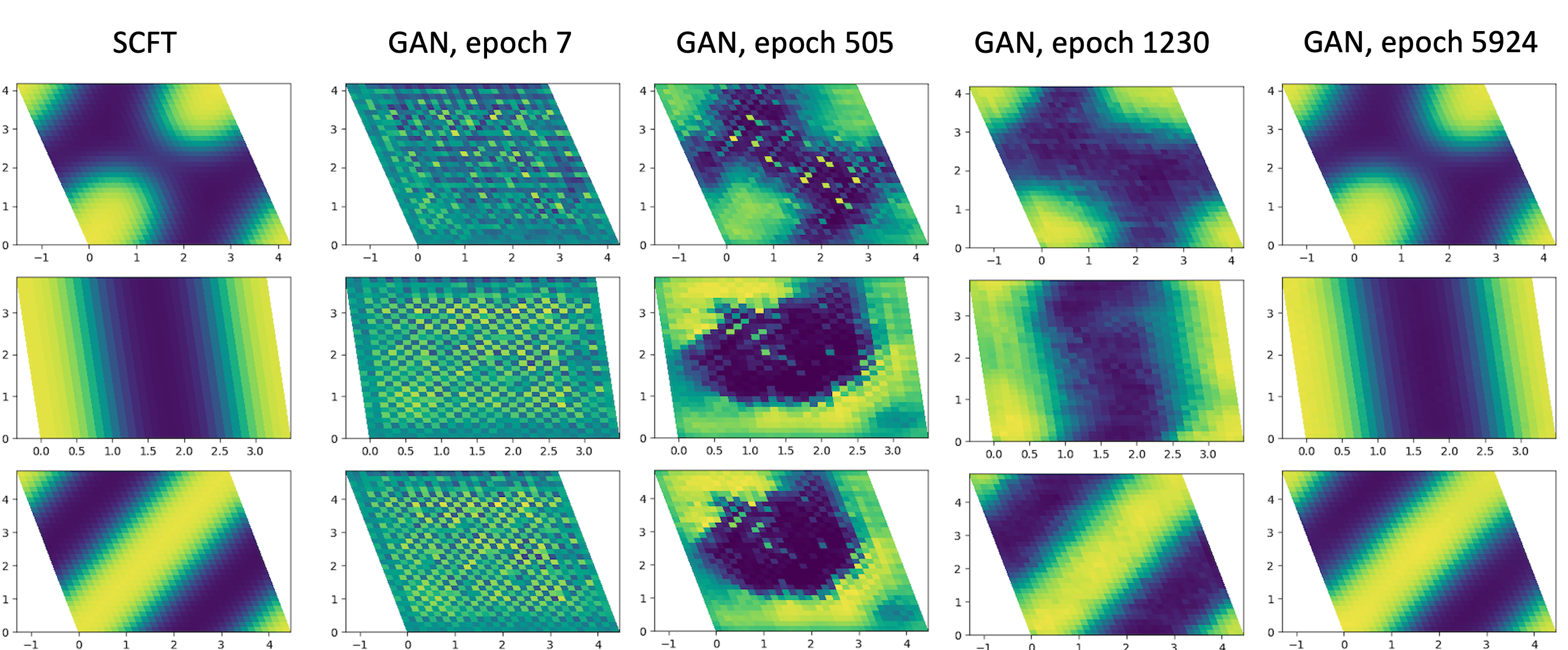}

\caption{Predictions of density fields on the validation set during training of the ScftGAN. Each row shows how the prediction changes for a representative case as the number of training epochs \revision{(proportional to stochastic gradient iterations)} increases. Column 1: the true (SCFT) saddle density field, Columns 2-5:  the density field predicted by the ScftGAN at different number of epochs. }
\label{gan_train}
\end{figure}

\end{section}

After training the ScftGAN we generate 5 density field candidates based on 5 different values of input random noise for each combination of parameters, ($\chi {N}^*,l_1^*$, $l_2^*$, $\theta^*$, $f^*$). From these fields,  we 
select the one that produces the smallest gradient of the Sobolev space-trained CNN approximation $\tilde{H}$. To demonstrate the efficacy of this procedure in the 
prediction of  density fields we take some representative cases from the test set. The results are summarized in Fig.~\ref{fig:test_evaluate}. The left column contains the  density fields computed directly by SCFT and the right column has the density fields predicted by the ScftGAN (finalized prediction selected by $\nabla \tilde{H}$). The true and the predicted fields look indistinguishable within the plotting resolution. 



\begin{figure}[!htbp]
    \centering
    \includegraphics[width=\linewidth]{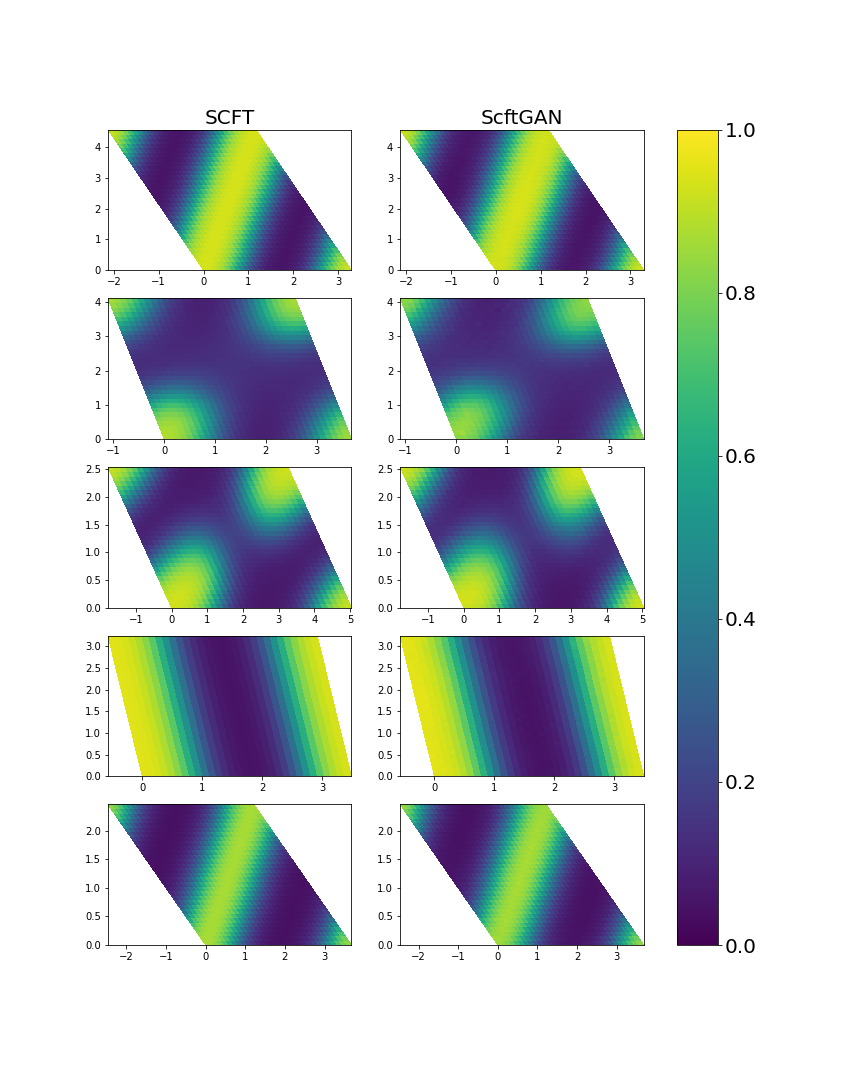}
    \caption{Comparison of the SCFT density field (left) and the corresponding \revision{machine learning prediction obtained with the combined ScftGAN and the Sobolev-space trained convolutional network} (right). Each row is a comparison of one representative case. The parameters $(l_1,l_2,\theta,f)$ for the 5 representative cases shown are as follows: row 1: $(3.4,5.2,2.01,0.48)$, row 2: $(3.8,4.4,1.83,0.3)$, row 3: $(5.2,3.2,2.18,0.36)$, row 4: $(3.60,3.40,1.74,0.50)$, row 5: $(3.80,3.60,2.36,0.38)$.}
    \label{fig:test_evaluate}
\end{figure}


\subsection{Cell size relaxation}\label{heatmap_screening}

An important SCFT downstream task is finding the optimal cell for  density fields.  SCFT models have the capability to relax the size and shape of the simulation domain to find the optimal cell. We show next that our predictive ML has also this capability, by
preserving the relationship between the  Hamiltonian and the  density fields, and can be effectively employed for this computationally intensive task. 
 
  The problem can be defined as follows: for fixed $\chi N^* ,f^*$, the  Hamiltonian and the  density fields are totally controlled by the cell parameters $(l_1, l_2,\theta)$. Taking $l=l_1=l_2$ as an example, for any $(l,\theta)$, we can obtain the corresponding  density field $\rho^{SCFT}$ and the  Hamiltonian $H^{SCFT}$ by direct SCFT simulation. Thus, we can get a heatmap $map^{SCFT}$ of  $H^{SCFT}$ as a function of the domain cell size with color representing the
  numerical value of $H^{SCFT}$.  Similarly, we can use our ML predictive tool, the 
  Sobolev space-trained $\text{CNN}_{Sbv}$ and the ScftGAN, to produce a predicted 
  heatmap $map^{ML}$ of the ML predicted  Hamiltonian $H^{ML}$.
   If  $map^{ML}$ coincides with $map^{SCFT}$, our ML framework has the capability to find the optimal cell size. It is important to note that in this downstream task of finding $map^{ML}$, we do not use any SCFT tools/data after training. 

To increase the robustness of the predicted heatmap, we obtain the final $map^{ML}$ by taking the average of 20 predicted maps on the same domain.  To further assess the 
effectiveness of the end-to-end ML method we also compute a heatmap $map^{MIX}$that uses a mix of SCFT and ML by evaluating $\tilde{H}(\rho^{SCFT})$, where $\rho^{SCFT}$ is the density field computed by direct SCFT,  and $\tilde{H}$ is the ML   Hamiltonian trained in Sobolev space. Figure~\ref{fig:heatmap} compares 
 $map^{SCFT}$ (first column),  $map^{MIX}$ (second column), and  $map^{ML}$ (third column). The remarkable agreement of $map^{ML}$ with $map^{SCFT}$
 demonstrates the significant potential for the proposed ML methodology, based on the Sobolev-trained CNN and the ScftGAN, to be an effective end-to-end substitute of the computationally intensive SCFT computations required for this and other parameter-space exploration tasks. 

\begin{figure}[!htbp]
    \centering
    \includegraphics[width=\linewidth]{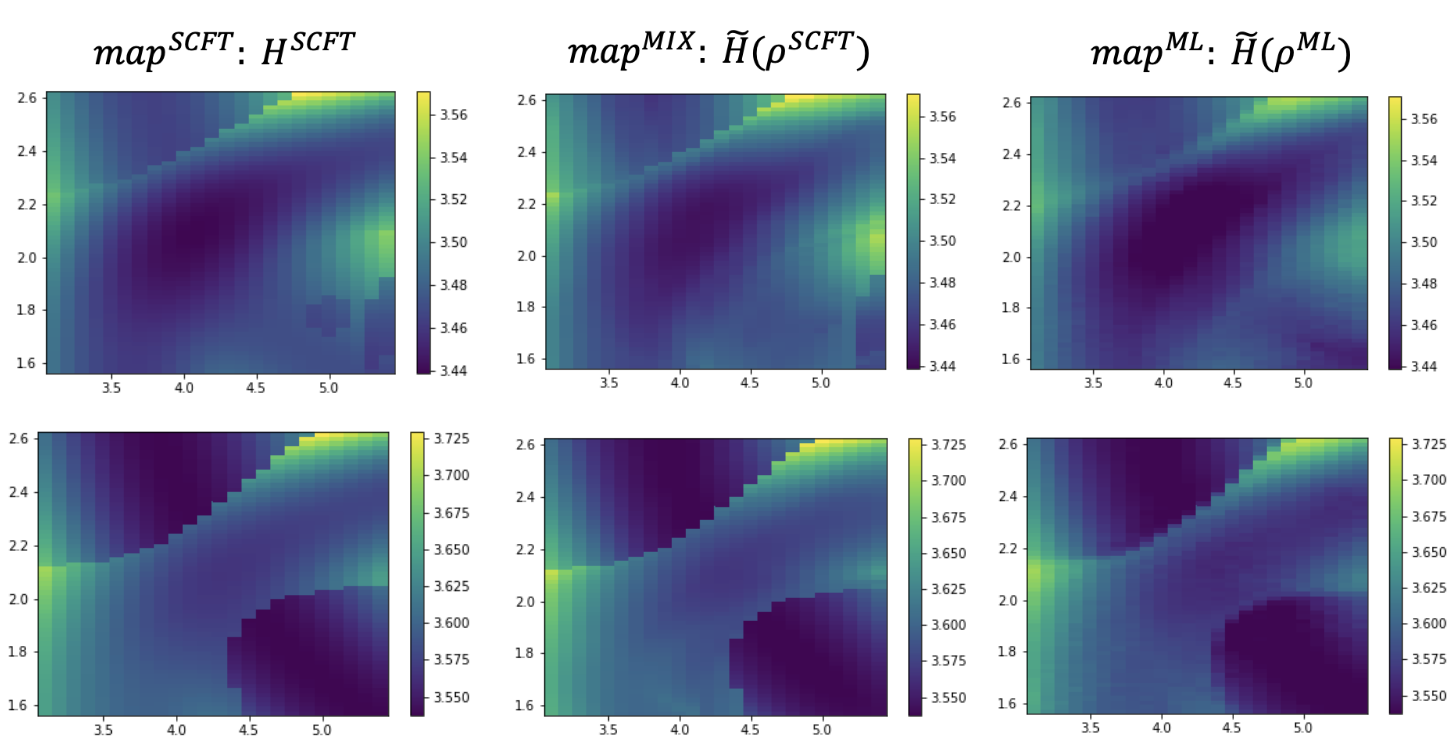}
    \caption{Heat maps \revision{(flooded contour plots) of the field-theoretic Hamiltonian with respect to cell size $(l, \theta)$ for two volume fractions $f$ of component $A$ of the diblock, $f=0.35$ (first row) and $f=0.42$ (second row).
     In each heat map, the horizontal axis is $l$ and the vertical axis is $\theta$. The color map represents the value of the Hamiltonian. The first column corresponds to the direct SCFT value. The second column corresponds to $\tilde{H}(\rho^{SCFT})$, where $\tilde{H}$ is the surrogate map learned with the convolutional network and $\rho^{SCFT}$ is the average monomer density (of blocks A) obtained from SCFT. Finally, the heat maps in the third column correspond to the end-to-end machine learning-based $\tilde{H}(\rho^{ML})$, where the predicted density field $\rho^{ML}$ was produced with the combined  ScftGAN and the convolutional network.}}
    \label{fig:heatmap}
\end{figure}  

\subsection{Efficient algorithm for cell size relaxation}\label{efficient_heatmap}
Section \ref{heatmap_screening} is a proof of concept to demonstrate the pre-trained ScftGAN is a powerful tool for cell size analysis and relaxation. Finding optimal cell shape/size is an important task in polymer design. We could simply screen the cell parameter space to find the optimal cell size minimizing the Hamiltonian as shown in Figure \ref{fig:heatmap}. However, the parameter space grows exponentially as the number of parameters increases. In such high dimensional space, direct searching for an optimal cell size is computational expensive. However, since the surrogate ScftGAN and $\tilde{H}$ are pre-trained, we could actually equip them with efficient searchers to optimize cell size. In this section, we consider a general three-dimensional space of $l_1, l_2,\theta$ ($l_1$ and $l_2$ are not necessarily equal) and propose to find the optimal cell size based on gradient descent method. Our goal is to efficiently solve  the following optimization problem 
\begin{equation}\label{cell_optmize}
    \argmin_{l_1,l_2,\theta} E_{z\sim p(z)}[\tilde{H}(l_1,l_2,\theta,\chi N,f,G(l_1,l_2,\theta,\chi N,f,z))].
\end{equation}

In the rest of the section, we denote the cell size parameters by $\mathbf{c}=(l_1,l_2,\theta)$. In each step of the gradient descent method to find a solution of (\ref{cell_optmize}), the expectation is estimated by Monte Carlo sampling of the noise $z$, and thus the gradient iterations are given by
\begin{equation}\label{cellsize:sgd}
    \mathbf{c}^{n+1}=\mathbf{c}^n-\frac{1}{\revision{N_e}}\sum_{i=1}^{\revision{N_e}} \nabla\tilde{H}(\mathbf{c}^n,\chi N,f,G(\mathbf{c}^n,\chi N,f,z^n_i)),
\end{equation}
where ${\revision{N_e}}$ is the number of random samples drawn for expectation estimation. In a bounded domain of interest, we use projected gradient descent method to ensure the searching happens in the interested domain.

We remark that in each step we update $\mathbf{c}$ based on sampling a mini-batch of size $N$ of $z^n$  rather than on a fixed pool of $z$, so the algorithm is equivalent to mini-batch stochastic gradient descent method.

To enhance the chance of finding the global optimum in the domain, we sample $M$ initial parameter candidates $\mathbf{c}^0$ in the domain, i.e., $\{\mathbf{c}_1^0,...\mathbf{c}_M^0\}$ and then update them based on iterative scheme defined in (\ref{cellsize:sgd}) to get $M$ candidates of optimal cell size $\{\tilde{\mathbf{c}}_1,...,\tilde{\mathbf{c}}_M\}$. Finally, the prediction is obtained by selecting the candidate which generates the smallest Hamiltonian value, i.e.,

$$\argmin_{\mathbf{c}\in \{\tilde{\mathbf{c}}_1,...,\tilde{\mathbf{c}}_M\} } E_{z\sim p(z)}[\tilde{H}(\mathbf{c},\chi N,f,G(\mathbf{c},\chi N,f,z))],$$
where the expectation is again estimated by Monte Carlo Sampling.

We just introduced a method for global searching in the cell size domain. We could also search in any sub-domain of cell size with a specific symmetry. For instance, an important families of cells are defined by the cells with the "p6mm" symmetry where $l_1=l_2$ and $\theta=\frac{2\pi}{3}$. If we want to search for the optimal cell size within the symmetry family, we could simply search for 

$$\argmin_{l} E_{z\sim p(z)}[\tilde{H}(l,l,\frac{2\pi}{3},\chi N,f,G(l,l,\frac{2\pi}{3},\chi N,f,z))],$$
where the the gradient could be computed by the chain rule in gradient descent iterations similar to (\ref{cellsize:sgd}).

We report now on a comparison between direct SCFT and the proposed ML-method predicting the optimal cell size. We tested both the global and the p6mm optimal cell size predictions. Since there is randomness in the searching process, we report the average of the absolute error in 10 independent searches. The domain of interest is defined by $l_1\in [3,5.5], l_2\in [3,5.5], \theta \in[\pi/2, 5\pi/6]$.

\begin{table}[htp!]
\centering
\begin{tabular}{cccc}
\hline
Avg difference & $l_1$  & $l_2$  & $\theta$ \\ \hline
Global    & 0.0149 & 0.0866 & 0.0261   \\ \hline
p6mm      & 0.0390 & 0.0390 & -        \\ \hline
\end{tabular}
\caption{Average difference between SCFT-predicted optimal cell size and ML-predicted optimal cell size. The average difference is computed over 10 independent ML predictions. ``Global" is the case we search among all the possible cells in the domain and ``p6mm" is the case we search among cells satisfying $l_1=l_2$ and $\theta=2\pi/3$ (so there is no need to compare difference of $\theta$ in p6mm).  $\chi N=16, f=0.35, M=4, N=4$. }
\label{optimal_cell_search}
\end{table}

As Table \ref{optimal_cell_search} shows the predictions of optimal cell size have 2 digits of accuracy. In the heatmap of Fig.~\ref{fig:heatmap}, there is generally a region of good cell size/shape candidates near the local/global optimal, which demonstrates that 2 digits of accuracy in cell optimization is sufficient for generating cell shape/size design candidates. 

We intend to explore other global optimum searching algorithms, such as Bayesian optimization or the Monte Carlo method,  to be built on top of our proposed ML-based methodology.

\subsection{Computational Efficiency}
To evaluate the computational efficiency of the proposed ML tool for predicting 
 density fields, we design the following numerical experiment. 
 We sample 5000 combinations of parameters in the domain of $l_1\in [5.1,5.5]$, $l_2 \in [4.6,5.5]$, $\theta \in [\pi/2,\frac{5\pi}{6}]$ and $f\in [0.41,0.5]$
 and run the  SCFT code and the ML tool on a CPU for these parameters. We also run 
 the ML tool on a GPU cluster to further assess its efficiency.
 The SCFT and ML CPU times are measured on the same machine (MacBook Pro, 2.2 GHz Intel Core i7 Processor, 16 GB 1600 MHz DDR3 Memory). The ML GPU time is reported for a Tesla P100 GPU. Figure~\ref{fig:timecomparison} shows the computational time to get the density fields on a 1k $\sim$ 5k combination of parameters. Clearly, the ML approach, on either CPU or GPU is significantly more efficient that the direct SCFT approach and its computational time grows much more slowly as the number of examples increases. This provides further evidence of the potential of proposed ML methodology to  accelerate the exploration of parameter space and phase discovery. 
\begin{figure}[!htp]
    \centering
    \includegraphics[width=.9\linewidth]{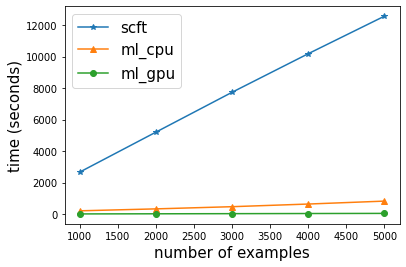}
    \caption{Computational time for \revision{the direct self consistent field theory (SCFT) computation of the average monomer density field and that by the machine learning model for a sample of 5000 combinations of parameters of cell size and shape, $l_1\in [5.1,5.5]$, $l_2 \in [4.6,5.5]$, $\theta \in [\pi/2,\frac{5\pi}{6}]$, and of volume fraction, $f\in [0.41,0.5]$. For the CPU comparison the same computer was employed. The GPU times were obtained on a Tesla P100 GPU cluster.}}
    \label{fig:timecomparison}
\end{figure}

\begin{section}{Conclusions}\label{sec:2d_con}
In this paper, we propose a ML approach to solve two central problems for 
polymer phase discovery via SCFT: accurate evaluation of the  Hamiltonian $H$ and density fields at saddle points. The proposed ML methodology consists of a CNN trained in Sobolev space to simultenously approximate $H$ and its gradient
with rotation and translation invariance, and an GAN-based algorithm to generate 
pools of density field candidates from the physical parameters 
for an ultimate selection of an optimal field with the Sobolev space-trained CNN.

Numerical experiments demonstrate both the efficacy and efficiency of the proposed ML methodology and show that its two main components can work in concert to solve the downstream task of finding the optimal cell size, either globally or within specific symmetry.
It is important to note that the Sobolev space-trained CNN  yields an accurate approximation of $H$ (and its gradient in a vicinity of saddle points) that is remarkably fast to evaluate.  Furthermore, we could also equip the proposed CNN with other screening methods,  such as Markov chain Monte Carlo (MCMC) or the Hamiltonian Monte Carlo algorithm, to enhance its prediction of saddle point density fields. 
 
Our methodology was developed for the 2D spatial setting. However, since 3D data can be viewed as an image with an additional channel, we believe the proposed CNN-based method can also be employed in this higher dimensional setting. 
 
\end{section}

\begin{section}{Acknowledgements}
H.D.C. and Y.X. acknowledge partial support from the National Science Foundation under award DMS-1818821. G. H. F. and K. T. D. were supported by the CMMT Program of the National Science Foundation (NSF) under Grant No. DMR-2104255. Use was made of
computational facilities purchased with funds from the NSF (OAC-1925717) and administered
by the Center for Scientific Computing (CSC). The CSC is supported by the California NanoSystems Institute and the
Materials Research Science and Engineering Center (MRSEC;
NSF DMR 1720256) at UC Santa Barbara.  
The authors appreciate Prof. Ruimeng Hu's support in computational resources.  
\end{section}

\clearpage

{\small
\bibliography{main}}

\begin{thebibliography}{10}

\bibitem{xuan2021deep}
Y.~Xuan, K.~T. Delaney, H.~D. Ceniceros, and G.~H. Fredrickson.
\newblock Deep learning and self-consistent field theory: A path towards
  accelerating polymer phase discovery.
\newblock {\em Journal of Computational Physics}, 443:110519, 2021.

\bibitem{fredrickson2006equilibrium}
G.~H. Fredrickson.
\newblock {\em The equilibrium theory of inhomogeneous polymers}.
\newblock Oxford University Press, 2006.

\bibitem{Matsen2007}
M.~W. Matsen.
\newblock {\em Self-Consistent Field Theory and Its Applications}, chapter~2,
  pages 87--178.
\newblock John Wiley \& Sons, Ltd, 2007.

\bibitem{Schmid_1998}
F.~Schmid.
\newblock Self-consistent-field theories for complex fluids.
\newblock {\em Journal of Physics: Condensed Matter}, 10(37):8105--8138, sep
  1998.

\bibitem{CF2004}
H.~D. Ceniceros and G.~H. Fredrickson.
\newblock {Numerical solution of polymer self-consistent field theory}.
\newblock {\em {Multiscale Modeling \& Simulation}}, {2}({3}):{452--474},
  {2004}.

\bibitem{stasiak2011efficiency}
P.~Stasiak and M.~W. Matsen.
\newblock Efficiency of pseudo-spectral algorithms with {A}nderson mixing for
  the {SCFT} of periodic block-copolymer phases.
\newblock {\em The European Physical Journal E}, 34(10):1--9, 2011.

\bibitem{wei2018machine}
Qianshi Wei, Ying Jiang, and Jeff~ZY Chen.
\newblock Machine-learning solver for modified diffusion equations.
\newblock {\em Physical Review E}, 98(5):053304, 2018.

\bibitem{nakamura2020phase}
Issei Nakamura.
\newblock Phase diagrams of polymer-containing liquid mixtures with a
  theory-embedded neural network.
\newblock {\em New Journal of Physics}, 22(1):015001, 2020.

\bibitem{arora2021random}
A.~Arora, T.~Lin, R.~J. Rebello, S.~H.~M. Av-Ron, H.~Mochigase, and B.~D.
  Olsen.
\newblock Random forest predictor for diblock copolymer phase behavior.
\newblock {\em ACS Macro Letters}, 10(11):1339--1345, 2021.

\bibitem{cnnorigin}
Y.~Lecun, L.~Bottou, Y.~Bengio, and P.~Haffner.
\newblock Gradient-based learning applied to document recognition.
\newblock {\em Proceedings of the IEEE}, 86(11):2278--2324, 1998.

\bibitem{VALUEVA2020232}
M.V. Valueva, N.N. Nagornov, P.A. Lyakhov, G.V. Valuev, and N.I. Chervyakov.
\newblock Application of the residue number system to reduce hardware costs of
  the convolutional neural network implementation.
\newblock {\em Mathematics and Computers in Simulation}, 177:232--243, 2020.

\bibitem{NIPS2014_5ca3e9b1}
I.~Goodfellow, J.~Pouget-Abadie, M.~Mirza, B.~Xu, D.~Warde-Farley, S.~Ozair,
  A.~Courville, and Y.~Bengio.
\newblock Generative adversarial nets.
\newblock In Z.~Ghahramani, M.~Welling, C.~Cortes, N.~Lawrence, and K.~Q.
  Weinberger, editors, {\em Advances in Neural Information Processing Systems},
  volume~27. Curran Associates, Inc., 2014.

\bibitem{yamashita2018convolutional}
R.~Yamashita, M.~Nishio, R.~K.~G. Do, and K.~Togashi.
\newblock Convolutional neural networks: an overview and application in
  radiology.
\newblock {\em Insights into imaging}, 9(4):611--629, 2018.

\bibitem{mirza2014conditional}
M.~Mirza and S.~Osindero.
\newblock Conditional generative adversarial nets.
\newblock {\em arXiv preprint arXiv:1411.1784}, 2014.

\bibitem{radford2015unsupervised}
A.~Radford, L.~Metz, and S.~Chintala.
\newblock Unsupervised representation learning with deep convolutional
  generative adversarial networks.
\newblock {\em arXiv preprint arXiv:1511.06434}, 2015.

\bibitem{dumoulin2016guide}
Vincent Dumoulin and Francesco Visin.
\newblock A guide to convolution arithmetic for deep learning.
\newblock {\em arXiv preprint arXiv:1603.07285}, 2016.

\end{thebibliography}
\bibliographystyle{unsrt}

\clearpage
\appendix
\section{Appendix}
\subsection{Early stopping in training ScftGAN}
We apply an early stopping technique to increase the stability in training the ScftGAN.
In order to ensure that the G outputs are not too far from the vicinity of the data manifold and the trained-ScftGAN is consistent with the Sobolev-trained CNN, we evaluate  the performance $\tilde{H}(G(l_1,l_2,\theta, f))$ on the validation set and stop at the epoch where the error satisfies the stopping criterion below. We start the evaluation in the last $T_e$ epochs and only do the evaluation in partial epochs to save computational time. The training will stop at,
\setcounter{equation}{0}
\begin{equation}
    \renewcommand{\theequation}{\thesection.\arabic{equation}}
    \argmin_{e \in e_{sample}} \sum_{i=1}^{N_V}\frac{1}{N_V}\|\tilde{H}(G(l_{1i},l_{2i},\theta_i, f_i))-H_i\|^2<\epsilon,
\end{equation}
where $N_V$ is the size of validation set, $e_{sample} \subseteq [T-T_e, T]$ with $T$ as the maximal number of epochs and $e_{sample}$ are sampled from the last $T_e$ epochs. In experiments, we sample $e_{sample}$ and evaluate the performance on the validation set about every 7 epochs (after training on every 2000 batches).

\subsection{Hyperparameter tuning}\label{appendix:hyperparameter}

The architecture of the Sobolev space-trained CNN is shown in Eq.~(\ref{eq:architecture}). We tune the hyperparameters by greedy search but  not heavily. We list in Table~\ref{hyper_cnn}
the hyperparameters used in the Sobolev space-trained CNN. $T$ is the number of epochs, $lr$ is the learning rate, $\beta$ is coefficient in the loss function (\ref{2d:costnn}), and $batchsize$ is the number of instances in each batch during training.  

\begin{table}[htp]
\centering
\begin{tabular}{cccc}
\hline
$T$    & $lr$   & $\beta$  & $batchsize$ \\ \hline
5000 & $5\times 10^{-4}$ & $1/32^2$ & 256       \\ \hline
\end{tabular}
\caption{Hyperparameter in the Sobolev space-trained CNN.}
\label{hyper_cnn}
\end{table}
In the architecture of Sobolev space-trained CNN, the number of channels in the output of each convolutional layers in turn are $(8, 16, 32, 8, 8)$, the number of neurons in each fully connected layers are $(64, 32, 16, 1)$.

The architecture of the ScftGAN is defined by Eq.~(\ref{scftgan_generator}) and Eq.~(\ref{scft_discriminator}). The hyperparameters in the ScftGAN are listed in Table \ref{hyper_Scftgan}. As in the Sobolev space-trained CNN, $T$ is the number of epochs, $lr$ is learning rate, $batchsize$ is the number of instances in each batch, $T_e$ is the number of epochs applying early-stopping criterion, $\epsilon$ is the threshold value in early-stopping criterion, $\lambda$ is the coefficient in loss function (\ref{scftgan_loss}) and $nz$ is the dimension of random noise in the ScftGAN.

\begin{table}[htp]
\centering
\begin{tabular}{ccccccc}
\hline
$T$  & $lr$ & $T_e$ & $\epsilon$ & $batchsize$ & $\lambda$ & $nz$ \\ \hline
6000 & $5\times 10^{-5}$ & 100  &$10^{-4}$ & 256         & $3 \times 10^4$       & 16   \\ \hline
\end{tabular}
\caption{Hyperparameters in the ScftGAN.}
\label{hyper_Scftgan}
\end{table}

In the generator $G$, the number of channels in the output of transposed convolutional layers are $(512, 256, 128, 64, 64, 64, 64, 64, 64, 64, 1)$. In the discriminator $D$, the number of channels in the output of each convolutional layers are $(64,128,256,512,1)$.

\subsection{\revision{An analysis of the dependence of the CNN Hamiltonian prediction error on the training set size}}\label{train_set_size_analysis}

\revision{In this section, we investigate the impact of the training set size on the prediction error of the Sobolev space-trained convolutional neural network in predicting the Hamiltonian.  To achieve this, we randomly select subsets from the training set with sizes that are powers of 2, such as 64, 128, 256, 512, etc. Given the stochastic nature of the gradient descent process, we train the CNN three times independently for each sub-training set of a particular size. We then compute the average root square error over the three runs on both the test set and training set, and the results are plotted in Fig.~\ref{fig:hvstrainsize}.}

\revision{Our findings indicate that increasing the number of data points in the training set leads to a decrease in both the training error and test error. Moreover, the gap between training error and test error gradually narrows down. There is a relatively larger gap between training error and test error when the training set size is as small as 64 ($2^6$). However, this gap is fully closed when the training size grows to 4096 ($2^{12}$). Nonetheless, when $N_T \geq 256$, both the training error and the test error reach an acceptable level.}

\revision{In this numerical experiment, we randomly selected the sub-training set, but there might be more effective sampling strategies that could further enhance model performance on small data. We believe that a  
study of small-data sampling is a meaningful problem that requires further exploration.}

\begin{figure}[!htp]
    \centering
    \includegraphics[width=.9\linewidth]{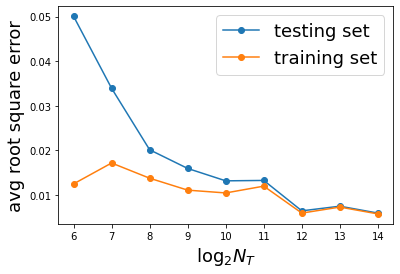}
    \caption{\revision{Hamiltonian approximation error versus training set size on the training set and the test set. The $x$-axis is log based 2 of $N_T$. The $y$-axis is the average root square error over 3 repeated runs.}}
    \label{fig:hvstrainsize}
\end{figure}  
\end{document}